\newcommand{\rronly}[1]{#1}
\newcommand{\pponly}[1]{}
\newcommand{\dpcmt}[1]{{\color{magenta} [{#1}]}}
\newcommand{\pscmt}[1]{{\color{blue} [{#1}]}}
\let\sigproof\proof\let\proof\relax
\let\sigendproof\endproof\let\endproof\relax
\let\proof\sigproof
\let\endproof\sigendproof
\theoremstyle{definition}
\newtheorem{theorem}{Theorem}[section]
\newtheorem{definition}{Definition}
\newtheorem{lemma}[theorem]{Lemma}
\newtheorem{property}{Property}
\newcommand*\Suppressnumber{%
  \lst@AddToHook{OnNewLine}{%
    \let\thelstnumber\relax%
     \advance\c@lstnumber-\@ne\relax%
    }%
}
\newcommand*\Reactivatenumber{%
  \lst@AddToHook{OnNewLine}{%
   \let\thelstnumber\origthelstnumber%
   \advance\c@lstnumber\@ne\relax}%
}
\newcommand{\Omit}[1]{}
\newcommand{\true}{\mathit{true}}
\newcommand{\false}{\mathit{false}}
\newcommand{\prog}{\mathit{Prog}}
\newcommand{\vars}{\mathit{Vars}}
\newcommand{\objs}{\mathit{Objs}}
\newcommand{\func}{f}
\newcommand{\funcs}{\mathit{Funcs}}
\newcommand{\threads}{\mathit{Threads}}
\newcommand{\nodes}{\mathit{Nodes}}
\newcommand{\lock}{\mathit{lock}}
\newcommand{\unknownlock}{\star}
\newcommand{\locks}{\mathit{Locks}}
\newcommand{\currnode}{\mathit{CurrentNode}}
\newcommand{\sign}{\mathit{Sign}}
\newcommand{\labels}{\mathit{Labels}}
\newcommand{\loc}{\ell}
\newcommand{\locs}{\mathit{L}}
\newcommand{\edges}{\mathit{E}}
\newcommand{\fp}{\mathit{fp}}
\newcommand{\fpm}{\mathit{fpm}}
\newcommand{\fpms}{\mathit{Fpms}}
\newcommand{\param}{\mathit{par}}
\renewcommand{\arg}{\mathit{arg}}
\newcommand{\place}{p}
\newcommand{\places}{P}
\newcommand{\src}{\textsf{src}}
\newcommand{\tgt}{\textsf{tgt}}
\newcommand{\customanalysis}{a}
\newcommand{\fsframework}{s}
\newcommand{\fiframework}{i}
\newcommand{\statea}{s_{\customanalysis}}
\newcommand{\states}{s_{\fsframework}}
\newcommand{\statei}{s_{\fiframework}}
\newcommand{\doma}{\mathcal{D}_{\customanalysis}}
\newcommand{\doms}{\mathcal{D}_{\fsframework}}
\newcommand{\domi}{\mathcal{D}_{\fiframework}}
\newcommand{\transff}{\mathcal{T}}
\newcommand{\transfaf}{\mathcal{T}_{\customanalysis}}
\newcommand{\transfsf}{\mathcal{T}_{\fsframework}}
\newcommand{\transfif}{\mathcal{T}_{\fiframework}}
\newcommand{\transf}[1]{\transff\llbracket #1 \rrbracket}
\newcommand{\transfa}[1]{\transfaf\llbracket #1 \rrbracket}
\newcommand{\transfs}[1]{\transfsf\llbracket #1 \rrbracket}
\newcommand{\transfi}[1]{\transfif\llbracket #1 \rrbracket}
\newcommand{\joina}{\join_{\customanalysis}}
\newcommand{\joins}{\join_{\fsframework}}
\newcommand{\joini}{\join_{\fiframework}}
\newcommand{\bota}{\bot_{\customanalysis}}
\newcommand{\pop}{\textsf{pop}}
\newcommand{\commonprefix}{\textsf{common\_prefix}}
\newcommand{\haspath}{\textsf{has\_path}}
\newcommand{\inloop}{\textsf{in\_loop}}
\newcommand{\onallpaths}{\textsf{on\_all\_paths}}
\newcommand{\match}{\textsf{match}}
\newcommand{\placetop}{\textsf{top}}
\newcommand{\entryloc}{\textsf{entry\_loc}}
\newcommand{\exitloc}{\textsf{exit\_loc}}
\newcommand{\entry}{\textsf{entry}}
\newcommand{\placenext}{\textsf{next}}
\newcommand{\placenexts}{\textsf{next}_{\fsframework}}
\newcommand{\placenexti}{\textsf{next}_{\fiframework}}
\newcommand{\maylockset}{\mathit{ls}_a}
\newcommand{\mustlockset}{\mathit{ls}_u}
\newcommand{\lockset}{\mathit{ls}}
\newcommand{\symbols}{\textsf{symbols}}
\newcommand{\cfaedges}{E}
\newcommand{\cfaedge}{e}
\newcommand{\getfunc}{\textsf{func}}
\newcommand{\getthread}{\textsf{get\_thread}}
\newcommand{\isfunc}{\textsf{is\_func}}
\newcommand{\isfp}{\textsf{is\_func\_pointer}}
\newcommand{\op}{\textsf{op}}
\newcommand{\lockop}{\textsf{lock}}
\newcommand{\unlockop}{\textsf{unlock}}
\newcommand{\createop}{\textsf{create}}
\newcommand{\joinop}{\textsf{join}}
\newcommand{\funcentry}{\textsf{func\_entry}}
\newcommand{\funcexit}{\textsf{func\_exit}}
\newcommand{\threadexit}{\textsf{thread\_exit}}
\newcommand{\threadjoin}{\textsf{thread\_join}}
\newcommand{\threadentry}{\textsf{thread\_entry}}
\newcommand{\matchfp}{\textsf{match\_fp}}
\newcommand{\dom}{\textsf{dom}}
\newcommand{\join}{\sqcup}
\newcommand{\joinfp}{\join_{\mathit{fp}}}
\newcommand*{\nolink}[1]{%
  \begin{NoHyper}#1\end{NoHyper}%
}
\begin{document}

\toappear{}

\setcopyright{acmcopyright}

\doi{}

\isbn{}

\conferenceinfo{ASE'16}{ASE'16}

\acmPrice{}

\conferenceinfo{ASE'16}{ASE'16}

\title{Sound Static Deadlock Analysis for C/Pthreads \rronly{\\(Extended Version)}
}

\numberofauthors{4}

\author{
\alignauthor Daniel Kroening\\
       \affaddr{University of Oxford}\\
       \affaddr{Oxford, UK}\\
       \email{kroening@cs.ox.ac.uk}
\alignauthor Daniel Poetzl\\
       \affaddr{University of Oxford}\\
       \affaddr{Oxford, UK}\\
       \email{daniel.poetzl@cs.ox.ac.uk}
\alignauthor Peter Schrammel\\
       \affaddr{University of Sussex}\\
       \affaddr{Brighton, UK}\\
       \email{p.schrammel@sussex.ac.uk}
\and
\alignauthor Bj\"orn Wachter\\
       \affaddr{University of Oxford}\\
       \affaddr{Oxford, UK}\\
       \email{bjoern.wachter@gmail.com}
}

\maketitle


\begin{abstract}
We present a static deadlock analysis approach for C/pthreads. The design of our
method has been guided by the requirement to analyse real-world code. Our
approach is sound (i.e., misses no deadlocks) for programs that have defined
behaviour according to the C standard, and precise enough to prove
deadlock-freedom for a large number of programs. The method consists of a
pipeline of several analyses that build on a new context- and thread-sensitive
abstract interpretation framework. We further present a lightweight dependency
analysis to identify statements relevant to deadlock analysis and thus speed up
the overall analysis. In our experimental evaluation, we succeeded to prove
deadlock-freedom for 262 programs from the Debian GNU/Linux distribution with in
total 2.6\,MLOC in less than 11 hours.
\end{abstract}

%
%
\begin{CCSXML}
<ccs2012>
<concept>
<concept_id>10003752.10010124.10010138.10010143</concept_id>
<concept_desc>Theory of computation~Program analysis</concept_desc>
<concept_significance>500</concept_significance>
</concept>
</ccs2012>
\end{CCSXML}

\ccsdesc[500]{Theory of computation~Program analysis}
%
%

%
%
\printccsdesc


\keywords{deadlock detection, lock analysis, static analysis, abstract interpretation}


\section{Introduction}

Locks are the most frequently used synchronisation mechanism in concurrent
programs to guarantee atomicity, prevent undefined beha\-vi\-our, and hide
weak-memory effects of the
underlying architectures.  However, locks, if not correctly used, can cause
a \emph{deadlock}, where one thread holds a lock that the other one needs
and vice versa.  Deadlocks can have disastrous consequences such as the
Northeastern Blackout~\cite{blackout},
where a monitoring system froze, and prevented detection of a local power
problem, which brought down the power network of the Northeastern
United States.

In small programs, deadlocks may be spotted easily.  However, this is not
the case in larger software systems.  So called \emph{locking disciplines} aim at
preventing deadlocks but are difficult to maintain as the system evolves, and
every extension bears the risk of introducing deadlocks.  For example, if
the order in which locks are acquired is different in different parts of the
code this may cause a deadlock.  
%
%

The problem is exacerbated by the fact that deadlocks are difficult to
discover by means of testing.  Even a test suite with full line coverage is
insufficient to detect all deadlocks, and similar to other concurrency bugs,
triggering a deadlock requires a specific thread schedule and set of
particular program inputs.  Therefore, static analysis is a promising
candidate for a thorough check for deadlocks.  The challenge is to devise a
method that is scalable and yet precise enough. The inherent trade-off
between scalability and precision has gated static approaches to deadlock
analysis, and many previously published results had to resort to unsound
approximations in order to obtain scalability (we provide a
\rronly{comprehensive }survey of the existing static approaches in
Sec.~\ref{sec:related}).

We hypothesise that static deadlock detection can be performed with a
sufficient degree of precision and scalability and without sacrificing
soundness.  To this end, this paper presents a new method for statically
detecting deadlocks.  To~quantify scalability, we have applied our
implementation to a large body of real-world concurrent code from the Debian
GNU/Linux project.


\medskip\noindent
Specifically, this paper makes the following contributions:

\medskip
\begin{compactenum}
\item The first, to the best of our knowledge, \emph{sound} static deadlock
analysis approach for C/pthreads that can handle real-world code.


\item A new context- and thread-sensitive abstract interpretation framework that
forms the basis for the analyses that comprise our approach. The framework unifies
contexts, threads, and program locations via the concept of a place.


\item A novel lightweight dependency analysis which identifies statements that
could affect a given set of program expressions. We use it to speed up the
pointer analysis by focusing it to statements that are relevant to deadlock
analysis.


\item We show how to build a lock graph that soundly captures a variety of
sources of imprecision, such as may-point-to information and thread
creation in loops/recursions, and how to combine the cycle detection
with a non-concurrency check to prune infeasible cycles.



\item A thorough experimental evaluation on 715 programs from
Debian GNU/Linux with 7.1\,MLOC in total and up to 50KLOC per program. 
%
\end{compactenum}

\section{Overview}\label{sec:overview}

\begin{figure*}
\begin{minipage}{0.34\textwidth}
\scriptsize
\begin{lstlisting}
void vlc_mutex_lock (vlc_mutex_t *p) {
  int val = pthread_mutex_lock(p);
  VLC_THREAD_ASSERT("locking mutex");
}
\end{lstlisting}
\vspace{-2.5ex}
\begin{lstlisting}
void create_worker(void *(*func)(void *),
                          void *arg) {
    pthread_attr_t  attr;
    int ret;
    pthread_attr_init(&attr);
    if ((ret=pthread_create(
               &((THREAD*)arg)->thread_id,
               &attr, func, arg)) != 0) {
        fprintf(stderr, "Error: %s\n",
                strerror(ret));
        exit(1);
    }
}
\end{lstlisting}
\vspace{-4ex}
\caption{Lock and create wrappers}
\label{fig:wrappers}
\end{minipage}%
\begin{minipage}{0.2\textwidth}
\scriptsize
\centering
\begin{tikzpicture}
\tikzset{
nodestyle/.style={draw, rounded corners, minimum width=1cm, text width=2.05cm, minimum height=0.75cm, align=center, inner sep=1mm},
edgestyle/.style={->, font=\small}
}
\node[nodestyle] (a) at (0, 0) {function pointer call removal};
\node[nodestyle] (b) at (0, -1.25) {return removal};
\node[nodestyle] (c) at (0, -2.5) {ICFA construction};

\node (d) at (0, 1) {C program};
\node (e) at (0, -3.5) {ICFA};

\draw (d) -- (a);
\draw (a) -- (b);
\draw (b) -- (c);
\draw (c) -- (e);

\end{tikzpicture}
\vspace{0.5ex}
\caption{ICFA constr.}
\label{fig:icfa_construction}
\end{minipage}%
\begin{minipage}{0.4\textwidth}
\scriptsize
\centering
\vspace{2ex}
\begin{tikzpicture}
\tikzset{
nodestyle/.style={draw, rounded corners, minimum width=1cm, text width=2.05cm, minimum height=0.75cm, align=center, inner sep=1mm},
edgestyle/.style={->, font=\small}
}

\node[nodestyle] (i) at (0, 0) {dependency analysis};
\node[nodestyle] (a) at (0, -1.875) {pointer\\analysis\makebox[0pt][l]{~**}};
\node[nodestyle] (b) at (2.75, -1.25) {may lockset analysis\makebox[0pt][l]{~*}};
\node[nodestyle] (c) at (2.75, -2.5) {must lockset analysis\makebox[0pt][l]{~*}};
\node[nodestyle] (d) at (5.5, 0) {lock graph construction\makebox[0pt][l]{~*}};
\node[nodestyle] (e) at (5.5, -1.25) {cycle detection};
\node[nodestyle] (f) at (5.5, -2.5) {non-concurrency check};

\node (g) at (7.25, -0.75) {yes};
\node (h) at (7.25, -1.5) {no};

\draw[->] (a) -- (b);
\draw[->] (a) -- (c);
\draw[->] (a) to[out=45, in=180] (d);
\draw[->] (c) -- (f);
\draw[->] (d) -- (e);
\draw[->] (b) -- (d);
\draw[->, dashed] (f) -- (e);
\draw[->] (i) -- (a);

\draw[->] (a) to[out=-90, in=-150] (f);

\draw[->] (e) -- (g);
\draw[->] (e) -- (h);
\end{tikzpicture}
\vspace{2ex}
\caption{Analysis pipeline}
\label{fig:pipeline}
\end{minipage}%
\end{figure*}

The design of our analyses has been guided by the goal to analyse
real-world concurrent C/pthreads code in a \emph{sound} way.
Fig.~\ref{fig:pipeline} gives an overview of our analysis pipeline.
An~arrow between two analyses indicates that the target uses information
computed by the source.  We use a dashed arrow from the non-concurrency
analysis to the cycle detection to indicate that the required information is
computed on-demand (i.e., the cycle detection may repeatedly query the
non-concurrency analysis, which computes the result in a lazy fashion).
All of the analyses operate on a graph representation of the program (introduced
in Sec.~\ref{sec:program_representation}). The exception is the cycle
detection phase, which only uses the lock graph computed in the lock
graph construction phase.

The pointer analysis, may- and must-lockset analysis, and the lock graph
construction are implemented on top of our new generic context- and
thread-sensitive analysis framework (described in detail in
Sec.~\ref{sec:threadsensitive}).  To enable trade-off between precision
and cost, the framework comes in a flow-insensitive and a flow-sensitive
version.  The pointer analysis was implemented on top of the former (thus
marked with ** in Fig.~\ref{fig:pipeline}), and the may- and must-lockset
analysis and the lock graph construction on top of the latter (marked with
*).  The dependency analysis and the non-concurrency analysis are
separate standalone analyses.

\begin{figure}[t]
\scriptsize
\centering
\begin{subfigure}{0.55\linewidth}
\begin{lstlisting}[mathescape,firstnumber=1,xleftmargin=1ex]
int main()
{
  pthread_t tid;
  $$
  pthread_create(&tid, 0,|\label{df:main:create}|
    thread, 0);
  $$
  pthread_mutex_lock(&m1);|\label{df:main:lock1}|
  pthread_mutex_lock(&m3);|\label{df:main:lock2}|
  pthread_mutex_lock(&m2);|\label{df:main:lock3}|
  func1();|\label{df:main:func1}|
  pthread_mutex_unlock(&m2);|\label{df:main:unlock1}|
  pthread_mutex_unlock(&m3);|\label{df:main:unlock2}|
  pthread_mutex_unlock(&m1);|\label{df:main:unlock3}|
  $$
  pthread_join(tid, 0);|\label{df:main:join}|
  $$
  int r;
  r = func2(5)|\label{df:main:func2}|
  $$
  return 0;|\label{df:main:return}|
}|\label{df:main:exit}|
\end{lstlisting}
\begin{lstlisting}[mathescape,firstnumber=22,xleftmargin=1ex]
void func1()
{
  x = 0;|\label{df:func1:x0}|
}
\end{lstlisting}
\end{subfigure}%
\begin{subfigure}{0.55\linewidth}
\begin{lstlisting}[mathescape,firstnumber=26,xleftmargin=1ex]
void *thread()
{
  pthread_mutex_lock(&m1);|\label{df:thread:lock1}|
  pthread_mutex_lock(&m2);|\label{df:thread:lock2}|
  pthread_mutex_lock(&m3);|\label{df:thread:lock3}|
  x = 1;|\label{df:thread:x1}|
  pthread_mutex_unlock(&m3);|\label{df:thread:unlock1}|
  pthread_mutex_unlock(&m2);|\label{df:thread:unlock2}|
  pthread_mutex_unlock(&m1);|\label{df:thread:unlock3}|
  $$
  pthread_mutex_lock(&m4);|\label{df:thread:lock4}|
  pthread_mutex_lock(&m5);|\label{df:thread:lock5}|
  x = 2;|\label{df:thread:x2}|
  pthread_mutex_unlock(&m5);|\label{df:thread:unlock4}|
  pthread_mutex_unlock(&m4);|\label{df:thread:unlock5}|
  $$
  return 0;|\label{df:thread:return}|
}|\label{df:thread:exit}|
\end{lstlisting}
\begin{lstlisting}[countblanklines=false,mathescape,firstnumber=44,xleftmargin=1ex]
int func2(int a)
{
  pthread_mutex_lock(&m5);|\label{df:func2:lock5}|
  pthread_mutex_lock(&m4);|\label{df:func2:lock4}|
  if(a)|\label{df:func2:if}|
    x = 3;|\label{df:func2:x3}|
  else
    x = 4;|\label{df:func2:x4}|
  pthread_mutex_unlock(&m4);|\label{df:func2:unlock4}|
  pthread_mutex_unlock(&m5);|\label{df:func2:unlock5}|
  return 0;|\label{df:func2:return}|
}|\label{df:func2:exit}|
\end{lstlisting}
\end{subfigure}
\vspace{-2ex}
\caption{\label{fig:example}
Example of a deadlock-free program}
\end{figure}




\vspace{-1ex}
\paragraph{Context and thread sensitivity}
Typical patterns in real-world C code suggest that an approach that provides
a form of context-sensitivity is necessary to obtain satisfactory precision
on real-world code, as otherwise there would be too many false deadlock
reports.  For instance, many projects provide their own wrappers for the
functions of the pthreads API.  Fig.~\ref{fig:wrappers}, for example,
shows a lock wrapper from the VLC project.  An analysis that is not
context-sensitive would merge the points-to information for pointer
\texttt{p} from different call sites invoking \texttt{vlc\_mutex\_lock()},
and thus yield many false alarms.

Thread creation causes a similar problem.  For every call to
\texttt{pthread\_create()}, the analysis needs to be determine which thread
is created (i.e., the function identified by the pointer passed to
\texttt{pthread\_create()}).  This is straightforward if a
function identifier is given to \texttt{pthread\_create()}.  However,
similar to the case of lock wrappers above, projects often provide wrappers
for \texttt{pthread\_create()}.  Fig.~\ref{fig:wrappers} shows a wrapper
for \texttt{pthread\_create()} from the memcached project.  The wrapper then
uses the function pointer that is passed to \texttt{create\_worker()} to
create a thread.  Maintaining precision in such cases requires us to track
the flow of function pointer values from function arguments to function
parameters.  This is implemented directly as part of the analysis framework
(as opposed to in the full points-to analysis).






\paragraph{Dependency analysis}

Deadlock detection requires the information which lock objects an expression
used in a \texttt{pthread\_mutex\_lock()} call may refer to.  We compute
this data using the pointer analysis, which is potentially expensive. 
However, it is easy to see that potentially many assignments and function
calls in a program do not affect the values of lock expressions.  Consider
for example Fig.~\ref{fig:example}.  The accesses to \texttt{x} cannot affect
the value of the lock pointers
\texttt{m1}--\texttt{m5}.  Further, the code in function \texttt{func1}
cannot affect the values of the lock pointers, and thus in turn the call
\texttt{func1()} in line~\nolink{\ref{df:main:func1}} cannot affect the the lock pointers.

We have developed a lightweight context-insensitive, flow-insensitive analysis to
identify statements that may affect a given set of expressions. The result is
used to speed up the pointer analysis. The dependency analysis is based on
marking statements which (transitively) share common variables with the
given set of expressions. In our case, the relevant expressions are those used in lock-,
create-, and join-statements. For the latter two we track the thread
ID variable (first parameter of both) whose value is required
to determine which thread is joined by a join operation.
%
%
We~give the details of the dependency analysis in
Sec.~\ref{sec:dependency_analysis}.

\vspace{-1ex}
\paragraph{Non-concurrency analysis}

A deadlock resulting from a thread~1 first acquiring lock~$m_1$ and then
attempting to acquire~$m_2$ (at program location~$\loc_1$), and thread~2 first
acquiring $m_2$ and then attempting to acquire $m_1$ (at program location~$\loc_2$)
can only occur when in a concrete program execution the program locations $\loc_1$
and $\loc_2$ run concurrently. If we have a way of deciding whether two
locations could potentially run concurrently, we can use this information to
prune spurious deadlock reports. For this purpose we have developed a
non-concurrency analysis that can detect whether two statements cannot run
concurrently based on two criteria.

\emph{Common locks}.
If thread 1 and thread 2 hold a common lock at locations $\loc_1$ and
$\loc_2$, then they cannot both simultaneously reach those locations, and
hence the deadlock cannot happen.  This is illustrated in
Fig.~\ref{fig:example}.  The thread \texttt{main()} attempts to acquire the locks
in the sequence $m_1$, $m_3$, $m_2$, and the thread \texttt{thread()} attempts to
acquire the locks in the sequence $m_1$, $m_2$, $m_3$.  There is an order
inversion between $m_2$ and $m_3$, but there is no deadlock since the two
sections \nolink{\ref{df:main:lock1}}--\nolink{\ref{df:main:unlock3}} and
\nolink{\ref{df:thread:lock1}}--\nolink{\ref{df:thread:unlock3}} (and thus in particular the locations \nolink{\ref{df:main:lock3}} and
\nolink{\ref{df:thread:lock3}}) are protected by the common lock $m_1$.  The common locks criterion
has first been described by Havelund~\cite{Hav00} (common locks are called
\emph{gatelocks} there).

\emph{Create and join}. Statements might also not be able to run
concurrently because of the relationship between threads due to the \texttt{pthread\_create()}
and \texttt{pthread\_join()} operations.  In Fig.~\ref{fig:example}, there is an order
inversion between the locks of $m_5$ and $m_4$ by function \texttt{func2()}, and the
locks of $m_4$, $m_5$ of thread \texttt{thread()}.  Yet there is no deadlock since
the thread \texttt{thread()} is joined before \texttt{func2()} is invoked.

Our non-concurrency analysis makes use of the must lockset analysis (computing
the locks that must be held) to detect common locks. To detect the relationship
between threads due to create and join operations it uses a search on the
program graph for joins matching earlier creates.
We give more details of our non-concurrency analysis in
Sec.~\ref{sec:nonconcurrent}.

\Omit{
In the following section we introduce our program representation and some
notation.
}



















\section{Analysis Framework}\label{sec:analysis_framework}

In this section, we first introduce our program representation, then describe
our context- and thread-sensitive framework, and then describe the
pointer analysis and lockset analyses that are implemented on top of the
framework.

\subsection{Program Representation}\label{sec:program_representation}

\emph{Preprocessing}.
Our tool takes as input a concurrent C program using the pthreads threading
library.
In a first step the calls to functions through function pointers are removed. A
call is replaced by a case distinction over the functions the function pointer
could refer to. Specifically, a function pointer can only refer to functions
that are type-compatible and of which the address has been taken at some point
in the code. This is illustrated in Fig.~\ref{fig:function_pointers} (top).
Functions \texttt{f2()} (address not taken) and \texttt{f4()} (not
type-compatible) do not have to be part of the case distinction.
In a second step, functions with multiple exit points (i.e., multiple return
statements) are transformed such as to have only one exit points (illustrated in
Fig.~\ref{fig:function_pointers} (bottom)).

\emph{Interprocedural CFAs}. We transform the program into a graph
representation which we term \emph{interprocedural control flow automaton}
(ICFA). The functions of the program are represented as CFAs~\cite{HJMS02}. CFAs
are similar to control flow graphs, but with the nodes representing program
locations and the edges being labeled with operations. ICFAs have additional
inter-function edges modeling function entry, function exit, thread entry,
thread exit, and thread join. Fig.~\ref{fig:example} shows a concurrent C
program and Fig.~\ref{fig:cfa} shows its corresponding ICFA (leaving off
thread exit and thread join edges and the function \texttt{func1()}).



\begin{figure*}[t]
\scriptsize
\begin{minipage}{0.37\textwidth}
\begin{minipage}{\textwidth}
\begin{minipage}{0.37\textwidth}
\begin{lstlisting}
void f1() {}
void f2() {}
void f3() {}
int f4() {}
...
... = &f1;
... = &f3;
... = &f4;
fp();
\end{lstlisting}
\end{minipage}%
\begin{minipage}{0.12\textwidth}
\large
$\Rightarrow$
\end{minipage}%
\begin{minipage}{0.45\textwidth}
\begin{lstlisting}
...
if(fp==f1)
 f1();
else
if(fp==f3)
 f3();
\end{lstlisting}
\end{minipage}
\end{minipage}
\begin{minipage}{\textwidth}
\begin{minipage}{0.37\textwidth}
\begin{lstlisting}
int f() {
 if(...)
   return 0;
 else
   return 1;
}
...
a = f();
\end{lstlisting}
\end{minipage}%
\begin{minipage}{0.12\textwidth}
\large
$\Rightarrow$
\end{minipage}%
\begin{minipage}{0.5\textwidth}
\begin{lstlisting}
int f() {
  int ret;
  if(...)
    ret = 0;
    goto END;
  else
    ret = 1;
    goto END;
END:
  return ret;
}
...
a = f();
\end{lstlisting}
\end{minipage}
\end{minipage}
\caption{Function pointers and returns}
\label{fig:function_pointers}
\end{minipage}%
\begin{minipage}{0.6\textwidth}
\begin{tikzpicture}

\tiny

\tikzset{
nodestyle/.style={draw, circle, minimum size=4.5mm, inner sep=0.5mm},
edgestyle/.style={->, font=\small},
cl/.style={midway, anchor=right, left},
cr/.style={midway, anchor=left, right}
}

\begin{scope}
\node[yshift=0.7cm] (fs) {\small func2(int a)};
\node[nodestyle] (f0) {$\nolink{\ref{df:func2:lock5}}$};
\node[nodestyle, below of=f0] (f1) {$\nolink{\ref{df:func2:lock4}}$};
\node[nodestyle, below of=f1] (f2) {$\nolink{\ref{df:func2:if}}$};

\node[nodestyle, below of=f2, xshift=-1cm] (f3) {$\nolink{\ref{df:func2:x3}}$};
\node[nodestyle, below of=f2, xshift=1cm] (f4) {$\nolink{\ref{df:func2:x4}}$};

\node[nodestyle, below of=f4, xshift=-1cm] (f5) {$\nolink{\ref{df:func2:unlock4}}$};
\node[nodestyle, below of=f5] (f6) {$\nolink{\ref{df:func2:return}}$};
\node[nodestyle, below of=f6] (f7) {$\nolink{\ref{df:func2:exit}}$};

\draw[edgestyle] (f0) -- (f1) node[cl] {\lockop($m_5$)};
\draw[edgestyle] (f1) -- (f2) node[cl] {\lockop($m_4$)};

\draw[edgestyle] (f2) -- (f3) node[cl] {$[a\neq 0]$};
\draw[edgestyle] (f2) -- (f4) node[cr] {$[a = 0]$};

\draw[edgestyle] (f3) -- (f5) node[cl] {\texttt{x = 3}};
\draw[edgestyle] (f4) -- (f5) node[cr] {\texttt{x = 4}};

\node (e1) at (0, -4.5) {\small...};
\draw[edgestyle] (f6) -- (f7) node[cl] {return 0};

\end{scope}

\begin{scope}[xshift=4cm]
\node[yshift=0.7cm] (fs) {\small main()};
\node[nodestyle] (m0) {$\nolink{\ref{df:main:create}}$};
\node[nodestyle, below of=m0] (m1) {$\nolink{\ref{df:main:lock1}}$};
\node[nodestyle, below of=m1] (m2) {$\nolink{\ref{df:main:lock2}}$};

\node[nodestyle, below of=m2] (m3) {$\nolink{\ref{df:main:join}}$};
\node[nodestyle, below of=m3] (m4) {$\nolink{\ref{df:main:func2}}$};
\node[nodestyle, below of=m4] (m5) {$\nolink{\ref{df:main:return}}$};
\node[nodestyle, below of=m5] (m6) {$\nolink{\ref{df:main:exit}}$};

\draw[edgestyle] (m0) -- (m1) node[cr, align=left, yshift=-2ex] {\createop(\&tid, \texttt{0},\\\hspace{6.5ex} thread, \texttt{0})};
\draw[edgestyle] (m1) -- (m2) node[cr] {\lockop($m_3$)};

\draw[edgestyle] (m3) -- (m4) node[cr] {\joinop(tid, \texttt{0})};
\draw[edgestyle] (m4) -- (m5) node[cr] {r = func2(3)};
\draw[edgestyle] (m5) -- (m6) node[cr] {return 0};
\end{scope}

\begin{scope}[xshift=9cm]
\node[yshift=0.7cm] (fs) {\small thread()};
\node[nodestyle] (t0) {$\nolink{\ref{df:thread:lock1}}$};
\node[nodestyle, below of=t0] (t1) {$\nolink{\ref{df:thread:lock2}}$};
\node[nodestyle, below of=t1] (t2) {$\nolink{\ref{df:thread:lock3}}$};

\node[nodestyle, below of=t2] (t3) {$\nolink{\ref{df:thread:unlock4}}$};
\node[nodestyle, below of=t3] (t4) {$\nolink{\ref{df:thread:unlock5}}$};
\node[nodestyle, below of=t4] (t5) {$\nolink{\ref{df:thread:return}}$};
\node[nodestyle, below of=t5] (t6) {$\nolink{\ref{df:thread:exit}}$};

\draw[edgestyle] (t0) -- (t1) node[cl] {\lockop($m_1$)};
\draw[edgestyle] (t1) -- (t2) node[cl] {\lockop($m_2$)};

\draw[edgestyle] (t3) -- (t4) node[cl] {\unlockop($m_5$)};
\draw[edgestyle] (t4) -- (t5) node[cl] {\unlockop($m_4$)};
\draw[edgestyle] (t5) -- (t6) node[cl] {return \texttt{0}};
\end{scope}

\draw[edgestyle, dashed] (m0) -- (t0) node[above, midway]
  {\threadentry($\text{thread}$, \texttt{0}, par)};
\draw[edgestyle, dashed] (m4) -- (f0) node[above=5mm, right=-4mm, midway] {\funcentry(5, a)};
\draw[edgestyle, dashed] (f7) -- (m5) node[above=2mm, midway] {\funcexit(0, r)};

\node (e1) at (3.98, -2.5) {\small...};
\node (e2) at (8.98, -2.5) {\small...};

\end{tikzpicture}
\caption{ICFA associated with the program in Fig.~\ref{fig:example}}
\label{fig:cfa}
\end{minipage}
\end{figure*}

We denote by $\prog$ a program (represented as an ICFA), by $\funcs$ the set of
identifiers of the functions,
by $\locs = \{\loc_0,
\ldots, \loc_{n-1}\}$ the set of program locations, by $\edges$ the
set of edges connecting the locations, and by $\op(e)$ a function that
labels each edge with an
operation. For example, in Fig.~\ref{fig:cfa}, the edge between locations
$\nolink{\ref{df:func2:x3}}$ and $\nolink{\ref{df:func2:unlock4}}$ is labeled with the operation \texttt{x=3}, and the edge between
locations $\nolink{\ref{df:main:func2}}$ and $\nolink{\ref{df:func2:lock5}}$ is labeled with the operation \funcentry(5, a).
%
%

We further
write $\locs(f)$ for the locations in function $f$.
Each program location
is contained in exactly one function. The function $\getfunc(\loc)$ yields
the function that contains~$\loc$.  The set of variable identifiers in the
program is denoted by~$\vars$. We assume that all identifiers in $\prog$ are
unique, which can always be achieved by a suitable renaming of identifiers.
%
%

We treat lock, unlock, thread create, and thread join as primitive
operations.  That is, we do not analyse the body of
e.g.~\texttt{pthread\_create()} (as implemented in e.g.~glibc on GNU/Linux
systems).  Instead, our analysis only tracks the semantic effect of the operation,
i.e., creating a new thread.

Apart from intra-function edges we also have inter-function edges that can be
labeled with the five operations \funcentry, \funcexit, \threadentry,
\threadexit, and \threadjoin.

A function entry edge (\funcentry) connects a call site to the function entry
point. The edge label also includes the
function call arguments and the function parameters. For example, \funcentry(5, a)
indicates that the integer literal~$5$ is passed to the call as an argument,
which is assigned to function parameter~$a$. A~function exit edge
(\funcexit) connects the exit point of a function to
\emph{every} call site calling the function.  Our analysis algorithm filters
out infeasible edges during the exploration of the ICFA.  That is, if a
function entry edge is followed from a function $f_1$ to function $f_2$,
then the analysis algorithm later follows the exit edge from $f_2$ to $f_1$,
disregarding exit edges to other functions.

A thread entry edge (\threadentry) connects a thread creation site to all
potential thread entry points.
It is necessary to connect to all potential thread entry points since often a
thread creation site can create threads of different types (i.e., corresponding
to different functions), depending on the value of the function pointer passed
to \texttt{pthread\_create()}. Analogous to the case of function exit edges, our analysis
algorithm tracks the values of function pointers during the ICFA exploration. At
a thread creation site it thus can resolve the function pointer, and only
follows the edge to the thread entry point corresponding to the value of the
function pointer.

A $\threadexit$ edge connects the exit point of a thread to the location following
all thread creation sites, and a $\threadjoin$ edge connects a thread exit point
to all join operations in the program.









\subsection{Analysis Framework -- Overview}\label{sec:threadsensitive}


\begin{figure}[h!]
\normalsize

\hrulefill
\vspace{1ex}

Domain: $\doms = \fpms \times \doma$

\vspace{-0.5ex}
\hrulefill
\vspace{-4ex}


\begin{align*}
\states^{1} \joins \states^{2} = & (\fpm, \statea)\\
& \text{with}~\states^{1} = (\fpm^1, \statea^{1})\\
& \text{with}~\states^{2} = (\fpm^2, \statea^{2})\\
& \text{with}~\fpm = \fpm^1 \joinfp \fpm^2\\
& \text{with}~\statea = \statea^{1} \joina \statea^{2}
\end{align*}





\hspace{12.5ex}$\fpm^1 \joinfp \fpm^2 = \fpm$\\

\vspace{-1.5ex}

$\fpm(\fp) =$
\[
\begin{cases}
d & \fpm^1(\fp) = d \vee \fpm^2(\fp) = d\\
v & (\fpm^1(\fp) = v \wedge (\fp \notin \dom(\fpm^2) \vee \fpm^2(\fp) = v)) \vee\\
  & (\fpm^2(\fp) = v \wedge (\fp \notin \dom(\fpm^1) \vee \fpm^1(\fp) = v))\\
\bot & \text{otherwise}
\end{cases}
\]



\vspace{-1.5ex}
\hrulefill
\vspace{1ex}

With $\cfaedge = (\loc_1, \loc_2)$, $\placetop(\place) = \loc_1$, $f = \getfunc(\loc_2)$, and $n = |p|$:\\

\vspace{-1ex}

$\placenexts(\cfaedge, \place) =$
\[
\begin{cases}
  \entry_s(\place, \loc_2) &\op(\cfaedge) \in \{\threadentry, \funcentry\}\\
  \place[:n-2] + \loc_2    &\op(\cfaedge) \in \{\funcexit, \threadexit, \threadjoin\}\\
  \place[:n-1] + \loc_2    &\text{otherwise}
\end{cases}
\]

\vspace{1ex}

$\entry_s(\place, \loc) =$
\[
\begin{cases}
\place' + \loc' + \loc & \place = \place' + \loc' + \loc'' + \place'' \wedge \getfunc(\loc'') = f\\
\place  + \loc  & \text{otherwise}
\end{cases}
\]



\vspace{-0.5ex}
\hrulefill
\vspace{1ex}

With $\cfaedge = (\loc_{src}, \loc_{tgt})$, $\op(e) = \funcentry(\arg_1,\ldots,\arg_k, \param_1, \ldots, \param_k)$, and $\states = (\fpm, \statea)$:

\vspace{-2.5ex}

\begin{align*}
\transfs{e, \place}(\states) = & (\fpm', \transfa{e, \place\,}(\statea))
\end{align*}

\vspace{-2.5ex}

\[
\fpm'(\param_i) =
\begin{cases}
\arg_i & \isfunc(\arg_i)\\
\fpm(\arg_i) & \isfp(\arg_i)\\ 
\end{cases}
\]

\vspace{-1ex}
\hrulefill
\vspace{1ex}

With $\cfaedge = (\loc_{src}, \loc_{tgt})$, $f = \getfunc(\loc_{tgt})$,
$\op(\cfaedge) = \threadentry(\mathit{thr}, \arg, \param)$, and $\states = (\fpm, \statea)$:\\


$
\transfs{e, p}(\states) =
\begin{cases}
(\fpm', \transfa{e, \place}(\statea)) & \matchfp(\fpm, \mathit{thr}, f)\\
(\emptyset, \bota) & \text{otherwise}\\
\end{cases}
$

\vspace{-1ex}\begin{align*}
&\matchfp(\fpm, \mathit{thr}, f) =\\
&~~(\isfp(\mathit{thr}) \wedge \fpm(\mathit{thr}) \in \{\bot, d, \func\})\,\vee\\
&~~(\isfunc(\mathit{thr}) \wedge \mathit{thr} = \func)
\end{align*}

%



\vspace{-2ex}
\hrulefill
\vspace{1ex}

With $\cfaedge = (\loc_{src}, \loc_{tgt})$, $\op(e) \in \{\funcexit, \threadexit, \threadjoin\}$, and $\states = (\fpm, \statea)$:\vspace{2ex}\\

\vspace{-3ex}

$
\transfs{\cfaedge, \place}(\states) =
(\emptyset, \transfa{\cfaedge, \place}(\statea))
$

\vspace{0ex}
\hrulefill
\vspace{1ex}

With $\cfaedge = (\loc_{src}, \loc_{tgt})$, $\op(e) = \op$, and $\states = (\fpm, \statea)$:\\

\vspace{-1ex}

$
\transfs{\cfaedge, \place}(\states) = (\fpm, \transfa{\cfaedge, \place}(\statea))
$

\vspace{-0.5ex}
\hrulefill

\vspace{-1ex}
\caption{\label{fig:framework}
Context-, thread-, and flow-sensitive framework}
\end{figure}

Our framework to perform context- and thread-sensitive
analyses on ICFAs is based on abstract
interpretation~\cite{CC77}. It implements a flow-sensitive and
flow-insensitive fixpoint computation over the ICFA, and needs to be
parametrised with a custom analysis to which it delegates the handling of
individual edges of the ICFA. We provide more details and a formalization of the
framework in the next section.



Our analysis framework unifies contexts, threads,
and program locations via the concept of a \emph{place}.
%
%
A place is a tuple $(\loc_0, \loc_1, \ldots, \loc_n)$ of program
locations. The program locations $\loc_0, \ldots, \loc_{n-1}$ are
either function call sites or thread creation sites in the
program (such as, e.g., location~\nolink{\ref{df:main:func2}} in Fig.~\ref{fig:cfa}). The final location $\loc_n$ can be a program location of any
type. The locations $\loc_0, \ldots, \loc_{n-1}$ model a possible
function call and thread creation history that leads up to program
location $\loc_n$.
We denote the set of all places for a given program by $\places$. We use the $+$
operator to extend tuples, i.e., $(\loc_0, \ldots, \loc_{n-1}) + \loc_n =
(\loc_0, \ldots, \loc_{n-1}, \loc_n)$.
We further write $|\place|$ for the
length of the place.
We write $\place[i]$ for element $i$ (indices are 0-based).
We use slice notation to refer to contiguous parts of
places; $\place[i\colon\!j]$ denotes the part from index $i$ (inclusive) to
index $j$ (exclusive), and $\place[\colon\!i]$ denotes the prefix until index $i$ (exclusive).
We write $\placetop(\place)$ for the last location in the place.

As an example, in Fig.~\ref{fig:cfa}, place $(\nolink{\ref{df:main:func2}}, \nolink{\ref{df:func2:x3}})$
denotes the program location $\nolink{\ref{df:func2:x3}}$ in function \texttt{func2()} when it has
been invoked at call site $\nolink{\ref{df:main:func2}}$ in the main function. If function
\texttt{func2()} were called at multiple program locations $\loc_1,\allowbreak \ldots,\allowbreak
\loc_m$ in the main function, we would have different places $(\loc_1, \nolink{\ref{df:func2:x3}}),
\ldots, (\loc_m, \nolink{\ref{df:func2:x3}})$ for location $\nolink{\ref{df:func2:x3}}$ in function
\texttt{func2()}. Similarly, for the thread function \texttt{thread()} and, e.g., location $\nolink{\ref{df:thread:lock2}}$, we have a place $(\nolink{\ref{df:main:create}}, \nolink{\ref{df:thread:lock2}})$ with $\nolink{\ref{df:main:create}}$ identifying the
creation site of the thread.

Each place has an associated abstract thread identifier, which we refer
to as \emph{thread I\hspace{-0.5pt}D} for short. Given a place $\place =
(\loc_0, \ldots, \allowbreak \loc_{n})$, the associated thread ID is either $t =
()$ (the empty tuple) if no location in $\place$ corresponds to a thread
creation site, or $t = \loc_0, \ldots, \loc_i$, such that $\loc_i$ is a thread
creation site and all $\loc_j$ with $j > i$ are not thread creation sites.
%
%
It is in this
sense that our analysis is thread-sensitive, as the information
computed for each place can be associated with an abstract thread that
way. We write $\getthread(\place)$ for the thread ID associated with place $\place$.


The analysis framework must be parametrised with a custom analysis. The framework
handles the tracking of places, the tracking of the flow of function pointer values from function
arguments to function parameters, and it
invokes the custom analysis to compute dataflow facts for each
place.

The domain, transfer function, and join function of the
framework are denoted by $\doms$, $\transfsf$, and $\joins$, respectively, and
the domain, transfer function, and join function of the parametrising
analysis are denoted by $\doma$, $\transfaf$, and $\joina$.
%
%
The custom analysis has a transfer function $\transfaf: \cfaedges \times \places \rightarrow (\doma
\rightarrow \doma)$ and a join function $\joina: \doma \times \doma
\rightarrow \doma$. The domain of the framework (parametrised by the
custom analysis) is then $\doms = \fpms \times \doma$, the transfer
function is $\transfsf: \cfaedges \times \places \rightarrow (\doms \rightarrow \doms)$, and the join
function is $\joins: \doms \times \doms \rightarrow \doms$.

The set $\fpms$ is a
set of mappings from identifiers to functions which map function pointers to the
functions they point to. We denote the empty mapping by $\emptyset$. We further
write $\fpm(\fp) = \bot$ to indicate that $\fp$ is not in $\dom(\fpm)$ (the
domain of $\fpm$). A function pointer $\fp$ might be mapped by $\fpm$ either to
a function $f$ or to the special value $d$ (for ``dirty'') which indicates that
the analysed function assigned to $\fp$ or took the address of $\fp$. In this
case we conservatively assume that the function pointer could point to any
thread function.

\subsection{Analysis Framework -- Details}\label{sec:framework_details}

We now describe the formalization of the analysis framework which is shown in
Fig.~\ref{fig:framework}. The figure gives the flow-sensitive variant of our
framework. We refer to
\rronly{Appendix~\ref{sec:framework_fi}}
\pponly{the extended version of the paper~\cite{kroening:2016}}
for the flow-insensitive version.
The figure gives the domain, join
function $\joins$ and transfer function $\transfsf$ which are defined in terms
of the join function $\joina$ and transfer function $\transfaf$ of the
parametrising analysis (such as the lockset analyses defined in the next
section).

The function
$\placenexts(e, \place)$ defines how the place $\place$ is updated when the
analysis follows the ICFA edge~$e$.
For example, on a $\funcexit$ edge, the last two locations are removed from the
place (which are the exit point of the function, and the location of the call to
the function), and the location to which the function returns to is added to the
place (which is the location following the call to the function). The
$\threadentry$ and $\funcentry$ cases are delegated to $\textsf{entry}_s(\place,
\loc)$. The first case of the function handles recursion. If a location
$\loc''$ of the called function is already part of the place, then the prefix of
the place that corresponds to the original call to the function is reused (first
case). If no recursion is detected, the entry location of the function is simply
added to the current place (second case). For intra-function edges (last case of
$\placenext_s$), the last location is removed from the place and the target
location of the edge is added.


The overall result of the analysis is a mapping $s \in P
\rightarrow (\fpms \times \doma)$. The result is defined via a fixpoint
equation~\cite{CC77}. We obtain the result by computing the least
fixpoint (via a worklist algorithm) of the equation below (with $s_0$
denoting the initial state of the places):



\hspace{-4ex}
\begin{tikzpicture}
\node[right] (a) at (0, 0) {
\begin{minipage}{\linewidth}
\begin{flalign*}
s = s_0~\join~&\lambda\,p. \bigsqcup_{p', e~\text{s.t.} \textsf{np}(p, p', e)} \transfs{e, p'}(s(p'))
\end{flalign*}
\end{minipage}
};
\node (c) at (4.54, -0.18) {\scriptsize$c$};
\end{tikzpicture}

\vspace{-5ex}

\begin{flalign*}
\text{with}~~\textsf{np}(p, p', (\loc_1, \loc_2))~=~&\loc_1 = \textsf{top}(p') \wedge\\
                                         ~&\loc_2 = \textsf{top}(p) \wedge\\
                                         ~&\placenexts((\loc_1, \loc_2), p') = p\\[1ex]
\text{with}~~s \join s'~=~&\lambda\, p.\, s(p) \join_c s'(p)
\end{flalign*}


\noindent
The equation involves computing the join over all places~$\place'$ and edges
$e$ in the ICFA such that $\textsf{np}(\place, \place', e)$.

We next describe the definition of the transfer function of the framework in
more detail. The definition consists of four cases: (1) function entry, (2)
thread entry, (3) function exit, thread exit, thread join, and (4)
intra-function edges.

(1) When applying a function entry edge, a new function pointer map $\fpm'$ is
created by assigning arguments to parameters and looking up the values of the
arguments in the current function pointer map $\fpm$. As in the following cases,
the transfer function $\transfaf$ of the custom analysis is applied to the state
$\statea$.


(2) Applying a thread entry edge to a state $s_c$ yields one of two outcomes.
When the value of the function pointer argument $\mathit{thr}$ matches the
target of the edge (i.e., the edge enters the same function as the function
pointer points to), then the function pointer map is updated with $\arg$ and
$\param$ (as in the previous case), and the transfer function of the custom
analysis is applied. Otherwise, the result is the bottom element $\bot_c =
(\emptyset, \bot_a)$.


(3) The function pointer map is cleared (as its domain contains only parameter
identifiers which are not accessible outside of the function), and the custom
transfer function is applied.


(4) The custom transfer function is applied.

As is not shown for lack of space in Fig.~\ref{fig:framework}, if a function
pointer $\fp$ is assigned to or its address is taken, its value is set to $d$ in
$\fpm$, thus indicating that it could point to any thread function.

\paragraph{Implementation}
During the analysis we need to keep a mapping from places to abstract states (which
we call the \emph{state map}).
However, directly using the places as keys for the state maps in all analyses can lead
to high memory consumption. Our implementation therefore keeps a global two-way
mapping (shared by all analyses in Fig.~\ref{fig:pipeline}) between places and
unique IDs for the places (we call this the \emph{place map}). The state maps of
the analyses are then mappings from unique IDs to abstract states, and the
analyses consult the global place map to
translate between places and IDs when needed.

In the two-way place map, the mapping from places to IDs is implemented via a
trie, and the mapping from IDs to places via an array that stores pointers back
into the trie. The places in a program can be efficiently stored in a trie as
many of them share common prefixes. We give further details in
\rronly{Figure~\ref{fig:trie_numbering} in Appendix~\ref{sec:implementation}.}
\pponly{the extended version~\cite{kroening:2016}.}
%

\begin{figure}[t]
  \footnotesize

\hrulefill

Domain: $2^{\objs} \cup \{\{\unknownlock\}\}$

\hrulefill

$s_1 \join s_2 =
\begin{cases}
s_1 \cup s_2 & \text{if}~s_1, s_2 \neq \{\unknownlock\}\\
\{\unknownlock\}     & \text{otherwise}\\
\end{cases}$

\hrulefill



With $\op(e) = \lockop(a)$:

$
\transf{e, p}(s) =
\begin{cases}
s \cup vs(p, a) & \text{if}~s, vs(p, a) \neq \{\unknownlock\}\\
\{\unknownlock\}        & \text{otherwise}\\
\end{cases}
$

\vspace{2ex}

With $\op(e) = \unlockop(a)$:

$
\transf{e, p}(s) =
\begin{cases}
\emptyset    & \text{if}~|s|=1 \wedge s \neq \{\unknownlock\}\\
s - vs(p, a) & \text{if}~|s \cap vs(p, a)|=1 \wedge\\
             & \hspace{2.5ex}s \neq \{\unknownlock\} \wedge vs(p, a) \neq \{\unknownlock\}\\
s            & \text{otherwise}
\end{cases}
$

\vspace{2ex}

With $\op(e) \in \{\threadentry, \threadexit, \threadjoin\}$:

\vspace{1.5ex}

$
\transf{e, p}(s) = \emptyset
$

\vspace{0ex}
\hrulefill

\vspace{-1ex}

\caption{\label{fig:may_lock_set_analysis}May lockset analysis}
\end{figure}
\subsection{Pointer Analysis}\label{sec:alias}

We use a standard points-to analysis that is an instantiation of the flow-insensitive
version of the above framework 
\rronly{(see Appendix~\ref{sec:framework_fi}).}
\pponly{(see the extended version of the paper~\cite{kroening:2016}).}
It computes for each place an element of $\vars \rightarrow (2^{\objs} \cup \{\{\unknownlock\}\})$.
That is, the set of possible values of a pointer variable is either a finite set
of objects it may point to, or $\{\unknownlock\}$ to indicate that it could point to any
object. We~use $vs(\place, a)$ to denote the value set at place~$p$ of
pointer~$a$.
The pointer analysis is sound for concurrent programs due to its
flow-insensitivity~\cite{rinard:2001}.

%



\subsection{Lockset Analysis}\label{sec:lockset}

Our analysis pipeline includes a may lockset analysis (computing for each
place the locks that may be held) and a must lockset analysis (computing for
each place the locks that must be held). The former is used by the lock graph
analysis, and the latter by the non-concurrency analysis.

The may lockset analysis is formalised in Fig.~\ref{fig:may_lock_set_analysis}
as a custom analysis to parametrise the flow-sensitive framework with. The
must lockset analysis is given in 
\rronly{Appendix~\ref{sec:must_lockset_analysis_app}.}
\pponly{the extended version~\cite{kroening:2016}.}
%
%
%
Both the may and must lockset analyses makes use of the
previously computed points-to information by means of the function $vs()$. In
both cases, care
must be taken to compute sound information from the
may-point-to information provided by $vs()$. For example, for the may lockset
analysis on an $\unlockop(a)$ operation, we cannot just remove all elements in
$vs(p, a)$ from the lockset, as an unlock can only unlock one lock.
We use $\maylockset(\place), \mustlockset(\place)$ to denote the may and
must locksets at place~$p$.

\section{Dependency Analysis}\label{sec:dependency_analysis}

We have developed a context-insensitive, flow-insensitive \emph{dependency
analysis} to compute the set of assignments and function calls that might affect
the value of a given set of expressions (in our case the expressions used in lock-,
\mbox{create-,} and join-statements). The purpose of the analysis is to speed up the
following pointer analysis phase (cf. Fig.~\ref{fig:pipeline}).

Below we first describe a semantic characterisation
of dependencies between expressions and assignments, and then devise an
algorithm to compute dependencies based on syntax only (specifically, the
variable identifiers occuring in the expressions/assignments).

\paragraph{Semantic characterisation of dependencies}

Let $\mathit{AS} = \{ e \in \cfaedges(\prog)~|~\textsf{is\_assign}(\op(e))\}$ be the set of assignment edges. Let $\mathit{exprs}$
be a set of starting expressions. Let further $R(a), W(a)$ denote the set of memory
locations that an expression or assignment $a$ may read (resp.~write) over
\emph{all} possible executions of the program. Let further $M(a) = R(a) \cup
W(a)$. Then we define the immediate dependence relation $\mathit{dep}$ as follows
(with $*$ denoting transitive closure and $;$ denoting composition):

\vspace{-2.5ex}

\begin{align*}
&\mathit{dep}_1 \subseteq \mathit{exprs} \times \mathit{AS}, (a, b) \in \mathit{dep}_1 \Leftrightarrow R(a) \cap W(b) \neq \emptyset\\
&\mathit{dep}_2 \subseteq \mathit{AS} \times \mathit{AS}, (a, b) \in \mathit{dep}_2 \Leftrightarrow R(a) \cap W(b) \neq \emptyset\\
&\mathit{dep} = \mathit{dep}_1; \mathit{dep}_2^*
\end{align*}

If $(a, b) \in \mathit{dep}_1$, then the evaluation of expression~$a$
may read a memory location that is written to by assignment~$b$.  If~$(a, b)
\in \mathit{dep}_2$, then the evaluation of the assignment $a$ may read a
memory location that is written to by the assignment~$b$.  If $(a, b) \in
\mathit{dep}$, this indicates that the expression~$a$ can (transitively) be
influenced by the assignment~$b$.  We~say $a$ \emph{depends on} $b$ in this
case.

The goal of our dependency analysis is to compute the set of assignments $A =
\mathit{dep}|_{(\_, a) \mapsto a}$ (the binary relation $A$ projected to the
second component). 
However, we cannot directly implement a procedure based on the
definitions above as this would require the functions $R()$, $W()$ to return
the memory locations accessed by the expressions/assignments. This in turn would
require a pointer analysis--the very thing we are trying to optimise.

Thus, in the next section, we outline a procedure for computing the relation
$\mathit{dep}$ which relies on the symbols (i.e., variable identifiers) occuring
in the expressions/assignments rather then the memory locations accessed by
them.

\paragraph{Computing dependencies}

In this section we outline how we can compute an overapproximation of the set
of assignments $A$ as defined above. Let $\symbols(a)$ be a function that returns
the set of variable identifiers occuring in an expression/assignment.
For example,
$\symbols(\texttt{a[i]->lock}) = \{\texttt{a}, \texttt{i}\}$ and
$\symbols(\texttt{*p=q+1}) = \{\texttt{p}, \texttt{q}\}$.
As stated
in Sec.~\ref{sec:program_representation}, in our program representation all
variable identifiers in a program are unique.
We~first define the relation
$\mathit{sym}_2$ which indicates whether two assignments have common symbols:

\vspace{-2.5ex}

\begin{align*}
&\mathit{sym}_2 \subseteq \mathit{AS} \times \mathit{AS}\\
&(a, b) \in \mathit{sym}_2 \Leftrightarrow \symbols(a) \cap \symbols(b) \neq \emptyset
\end{align*}

Our analysis relies on the following property: If two assignments $a, b$ can
access a common memory location (i.e., $M(a) \cap M(b) \neq \emptyset$), then
$(a, b) \in \mathit{sym}_2^*$.
This can be seen as follows. Whenever a memory region/location is allocated in C
it initially has at most one associated identifier. For example, the memory
allocated for a global variable \texttt{x} at program startup has initially just
the associated identifier~\texttt{x}. Similarly, memory allocated via, e.g.,
\texttt{a = (int *)malloc(sizeof(int) * NUM)} has initially only the associated
identifier \texttt{a}. If an expression not mentioning \texttt{x}, such as
\texttt{*p}, can access the associated memory location, then the address of
\texttt{x} must have been propagated to \texttt{p} via a sequence of assignments
such as \texttt{q=\&x},  \texttt{s->f=q}, \texttt{p=s->f}, with each of the adjacent
assignments having common variables. Thus, if $a, b$ can access a common memory
location, then both must be ``connected'' to the initial identifier associated
with the location via such a sequence. Thus, in particular, $a, b$ are also
connected. Therefore, $(a, b) \in \mathit{sym}_2^*$.

We next define the $\mathit{sym}$ relation which also incorporates the starting
expressions:

\vspace{-2.5ex}

\begin{align*}
&\mathit{sym}_1 \subseteq \mathit{exprs} \times \mathit{AS}\\
&(a, b) \in \mathit{sym}_1 \Leftrightarrow \symbols(a) \cap \symbols(b) \neq \emptyset\\
&\mathit{sym} = \mathit{sym}_1; \mathit{sym}_2^*
\end{align*}

\noindent
As we will show below we have $\mathit{dep} \subseteq \mathit{sym}$ and thus
also $A = \mathit{dep}|_{(\_, a) \mapsto a} \subseteq \mathit{sym}|_{(\_, a)
\mapsto a}$.  Thus, if we compute $\mathit{sym}$ above we get an
overapproximation of~$A$.

The fact that $\mathit{dep} \subseteq \mathit{sym}$ can be seen as follows.
Let $(a, b) \in \mathit{dep}$.  Then there are $a_1, a_2, \ldots, a_n, b$
such that $(a_1, a_2)\allowbreak \in\allowbreak \mathit{dep}_1 \cup
\mathit{dep}_2, (a_2, a_3) \in \mathit{dep}_2, \ldots, (a_n, b) \in
\mathit{dep}_2$.  Let $(a', a'')$ be an arbitrary one of those pairs.  Then
$R(a) \cap W(b) \neq \emptyset$ by the definition of $\mathit{dep}_1$ and
$\mathit{dep}_2$.  Thus $M(a) \cap M(b) \neq \emptyset$.  As we have already
argued above, if two expressions/assignments can access the same memory
location then they must transitively share symbols.  Thus $(a', a'') \in
\mathit{sym}_1 \cup \mathit{sym}_2^*$ must hold.  Therefore, since we have
chosen $(a', a'')$ arbitrarily, we have that all of the pairs above are
contained in $\mathit{sym}_1 \cup \mathit{sym}_2^*$ and thus by the
definition of $\mathit{sym}$ and in particular the transitivity of
$\mathit{sym}_2^*$ we get $(a, b) \in \mathit{sym}$.

Thus, we can use the definition of $\mathit{sym}$ above to compute an
overapproximation of the set of assignments that can affect the starting
expressions as defined semantically in the previous section.

\paragraph{Algorithm}

Algorithm~\ref{alg:dependency_analysis} gives our dependency analysis. The
first phase (line 1, Algorithm~\ref{alg:affecting_assignments}) is based on
the ideas from the previous section and computes the set of assignments~$A$
that can affect the given set of starting expressions $\mathit{exprs}$.  It
does so by keeping a global set of variable identifiers $I$ which is
initialised to the set of variables occuring in the starting expressions. 
Then the algorithm repeatedly iterates over the program and checks whether
the assignments contain common symbols with $I$ (lines 8--12).  If yes, all
the symbols in this assignment are also added to $I$, and the edge is
recorded in~$E$.  This is repeated until a fixpoint is reached (i.e., $I$
has not changed in an iteration).  In addition to assignments, symbols might
also be propagated via functions calls and thread creation (from arguments
to parameters of the function/thread, and from the return expression to the
left-hand side of the call or the argument of the join).  This is handled in
lines 13--21.

After we have gathered all affecting assignments, the second phase of the
algorithm begins (Algorithm~\ref{alg:dependency_analysis}, lines 2--8). This
phase additionally identifies the function calls that might affect the starting
expressions.
It is based on the following observation. If a function does not contain any
affecting assignments, and any of the functions it calls do not either (and
those that this function calls in turn do not, etc.), then the calls to the
function cannot affect the starting expressions.
The ability to prune function calls has a potentially big effect on the performance
of the analysis, as it can greatly reduce the amount of code that needs to be
analysed.

In the following section we evaluate the performance and effectiveness of the
dependency analysis. Its effect on the overall analysis is evaluated in Sec.~\ref{sec:exp}.

\paragraph{Evaluation}

We have evaluated the dependency analysis on a subset of 100 benchmarks of
the benchmarks given in Sec.~\ref{sec:exp}. For each benchmark the dependency
analysis was invoked with the set of starting expressions $\mathit{exprs}$ being
those occuring in lock operations or as the first argument of create and join
operations.
The results are given in the table below.

\vspace{-2ex}




\noindent
\begin{figure}[h!]
\scriptsize
\newcolumntype{Y}{>{\raggedleft\arraybackslash}X}
\begin{tabularx}{\linewidth}{p{2cm}YYY}
\toprule
& runtime & sign. assign. & sign. func.\\
\midrule
25th percentile & $0.03$\,s~~ &  0.3\%~~ & 43.3\%~~\\
arithmetic mean & $0.26$\,s~~ & 40.0\%~~ & 63.3\%~~\\
75th percentile & $0.38$\,s~~ & 72.2\%~~ & 86.5\%~~\\
\bottomrule
\end{tabularx}
\end{figure}

\noindent The table shows that the average time (over all benchmarks) to
perform the dependency analysis was $0.26$\,s.  The first and last line give
the 25th and 75th percentile.  This indicates for example that for $25\%$ of
the benchmarks it took $0.03$\,s or less to perform the dependency analysis. 
The third and fourth column evaluate the effectiveness of the analysis. 
On~average, $40\%$ of the assignments in a program were classified as
significant (i.e., potentially affecting the starting expressions).  The
data also shows that often the number of significant assignments was very
low (in $25\%$ of the cases it was $0.3\%$ or less).  This happens when the
lock usage patterns in the program are simple, such as using simple lock
expressions (like \texttt{pthread\_mutex\_lock(\&mutex)}) that refer to
global locks with simple initialisations (such as using static
initialization via \texttt{PTHREAD\_MUTEX\_INITIALIZER}).

The average number of functions classified as significant was $63.3\%$. This
means that on average $36.7\%$ of the functions that occur in a program were
identified as irrelevant by the dependency analysis and thus do not need to
be analysed by the following pointer analysis.
%
%

Overall, the data shows that the analysis is cheap and
able to prune a significant number of assignments and functions.

\begin{algorithm}[t]
\footnotesize
\DontPrintSemicolon
\SetKwInOut{Input}{Input}
\SetKwInOut{Output}{Output}
\SetKwProg{Fn}{function}{}{}

\Input{ICFA $\prog$, lock edges $\mathit{lock\_edges}$}
\Output{Set of affecting edges $A$}

$A \gets \mathit{affecting\_edges}(\prog, \mathit{lock\_edges})$\;
$F \gets \{ f~|~e \in A \wedge f = \getfunc(\src(e)) \}$\;
$F_h \gets \emptyset$\;
\While{$F \neq \emptyset$}{
  $\text{remove}~f~\text{from}~F$\;
  $F_h \gets F_h \cup \{f\}$\;
  $E \gets \{ e~|~ \getfunc(\tgt(e)) = f \wedge\newline
           \hspace*{10.8ex}\op(e) \in \{\funcentry, \threadentry\}\}$\;
  \For{$e \in E$}{
    $A \gets A \cup \{e\}$\;
    $f' \gets \getfunc(\src(e))$\;
    \If{$f' \notin F_h$}{
      $F \gets F \cup \{f'\}$\;
    }
  }
}
\Return{A}
\caption{Dependency analysis}
\label{alg:dependency_analysis}
\end{algorithm}

\begin{algorithm}[t]
\footnotesize
\DontPrintSemicolon
\SetKwInOut{Input}{Input}
\SetKwInOut{Output}{Output}
\SetKwProg{Fn}{function}{}{}

\Fn{affecting\_edges($\prog, \mathit{lock\_edges}$)}{
  $A \gets \mathit{lock\_edges}$\;
  $S \gets \bigcup_{e \in \mathit{lock\_edges}} \symbols(\op(e))$\;
  $R \gets \emptyset$\;
  \For{$e \in E(\prog)$}{
    $\mathit{op} \gets \op(e)$\;
    \If{\upshape $op = (a = b)\vee\newline
        \hspace*{2.8ex}op = \threadentry(\_, a, b) \vee\newline
        \hspace*{2.8ex}op = \funcexit(a, b) \vee\newline
        \hspace*{2.8ex}op = \threadjoin(a, b)$}{
      \vspace{2pt}
      $R \gets R \cup \{\symbols(a) \cup \symbols(b)\}$\;
    }
    \uElseIf{\upshape $op = \funcentry(\arg_1, \ldots, \arg_n,\newline
             \hspace*{24.8ex}\param_1, \ldots, \param_n)$}{
      \vspace{2pt}
      $R \gets R \cup \{\symbols(\arg_i) \cup \symbols(\param_i)~|~\newline
      \hspace*{11.3ex}i \in \{1, \ldots, n\}\}$\;
    }
  }
  $\mathit{NM} \gets \text{number\_map}(R)$\;
  $\mathit{SM} \gets \text{symbol\_map}(R)$\;
  $N_h, S_h \gets \emptyset, \emptyset$\;
  \While{$S \neq \emptyset$}{
    $\text{remove}~s~\text{from}~S$\;
    $S_h \gets S_h \cup \{s\}$\;
    \For{$n \in \mathit{SM}[s]$}{
      \If{$n \notin N_h$}{
        \vspace{1pt}
        $N_h \gets N_h \cup \{n\}$\;
        $S \gets S \cup (\mathit{NM}[r] - S_h)$\;
      }
    }
  }
  \For{$e \in E(\prog)$}{
    \If{\upshape $op = (a = b)\vee\newline
        \hspace*{2.8ex}op = \funcexit(a, b) \vee\newline
        \hspace*{2.8ex}op = \threadjoin(a, b)$}{
      \vspace{2pt}
      \If{\upshape $((\symbols(a) \cup \symbols(b)) \cap S_h) \neq \emptyset$}{
        \vspace{1pt}
        $A \gets A \cup \{e\}$\;
      }
    }
  }
  \Return{A}
}


\caption{Affecting edges}
\label{alg:affecting_assignments}
\end{algorithm}

\section{Non-Concurrency Analysis}\label{sec:nonconcurrent}

We have implemented an analysis (Algorithm~\ref{alg:nonconcurrent}) to compute
whether two places $\place_1, \place_2$ are non-concurrent. That is, the analysis determines whether the statements
associated with the places $\place_1, \place_2$ (i.e., the operations with which
the outgoing edges of $\placetop(\place_1), \placetop(\place_2)$ are labeled) cannot
execute concurrently in the contexts embodied by
$\place_1, \place_2$.

Whether the places are protected by a common lock is determined by computing the
intersection of the must locksets (lines 3--4). If the intersection is non-empty
they cannot execute concurrently and the algorithm returns $\true$. Otherwise
the algorithm proceeds to check whether the
places are non-concurrent due to create and join operations. This is done via a
graph search in the ICFA.
First the length of the longest common prefix
of $p_1$ and $p_2$ is determined (line 5). This is the starting point for the ICFA exploration. If
there is a path from $\loc_1$ to $\loc_2$, it is checked that all the threads
that are created to reach place $\place_1$ are joined before location
$\loc_2$ is reached (and same for a path from $\loc_2$ to $\loc_1$). This check
is performed by the procedure $\mathit{unwind()}$, the full details
of which we give in 
\rronly{Appendix~\ref{sec:non-concurrency-app}.}
\pponly{the extended version~\cite{kroening:2016}.}

We evaluated the non-concurrency analysis with respect to what fraction of all
the pairs of places $\place_1, \place_2$ of a program it classifies as
non-concurrent. We found that on a subset of 100 benchmarks of the
benchmarks of Sec.~\ref{sec:exp}, it classified $60\%$ of the places corresponding
to different threads as non-concurrent on average. We give more data in
\rronly{Appendix~\ref{sec:non-concurrency-app}.}
\pponly{the extended version~\cite{kroening:2016}.}
\begin{algorithm}
\footnotesize
\DontPrintSemicolon
\SetKwInOut{Input}{Input}
\SetKwInOut{Output}{Output}

\Input{places $p_1$, $p_2$, must locksets $\mustlockset^1$, $\mustlockset^2$}
\Output{$\true$ if $\place_1, \place_2$ are non-conc., $\false$ otherwise}

  \If{$p_1 = p_2$}{
    \Return{$\true$}
  }

  \If{$\mustlockset^1 \cap \mustlockset^2 \neq \emptyset$}{
    \Return{$\true$}
  }

  $i \gets |\commonprefix(\place_1, \place_2)|$
  
  $r_1, r_2 \gets \true, \true$\;
  $\loc_1, \loc_2 \gets \place_1[i], \place_2[i]$\;

  \If{\upshape \haspath($\loc_1$, $\loc_2$)}{
    $r_1 \gets unwind(i, \place_1, \loc_1, \loc_2)$
  }

  \If{\upshape \haspath($\loc_2$, $\loc_1$)}{
    $r_2 \gets unwind(i, \place_2, \loc_2, \loc_1)$
  }

  \Return{$r_1 \wedge r_2$}
\caption{Non-concurrency analysis}
\label{alg:nonconcurrent}
\end{algorithm}



\section{Lock Graph Analysis}\label{sec:lockgraph}

Our lock graph analysis consists of two phases. First, we build a lock
graph based on the lockset analysis. In the second phase, we prune
cycles that are infeasible due to information from the non-concurrency
analysis.

\subsection{Lock Graph Construction}\label{sec:buildlockgraph}

\begin{figure}
\footnotesize
\begin{tikzpicture}[nodestyle/.style={circle, draw, inner sep=1.5pt}]
\node[nodestyle] (m1) at (0,0) {$m_1$};
\node[nodestyle] (m2) at (2.35,0) {$m_2$};
\node[nodestyle] (m3) at (4.7,0) {$m_3$};
\node[nodestyle] (m4) at (5.5,0) {$m_4$};
\node[nodestyle] (m5) at (7.85,0) {$m_5$};
\draw[->] (m1) --node[auto]{(\texttt{m:\nolink{\ref{df:main:create}},t:\nolink{\ref{df:thread:lock2}}})} (m2);
\draw[->] (m2) --node[auto]{(\texttt{m:\nolink{\ref{df:main:create}},t:\nolink{\ref{df:thread:lock3}}})} (m3);
\draw[->] (m4) --node[auto]{(\texttt{m:\nolink{\ref{df:main:create}},t:\nolink{\ref{df:thread:lock5}}})} (m5);

\draw[->] (m3) edge[out=120, in=60] node[above]{(\texttt{m:\nolink{\ref{df:main:lock2}}})} (m2);

\path[->] (m1) edge[out=-20, in=-160] node[below]{(\texttt{m:\nolink{\ref{df:main:lock3}}})} (m3);

\path[->] (m5) edge[bend left] node[below]{(\texttt{m:\nolink{\ref{df:main:func2}},f:\nolink{\ref{df:func2:lock4}}})} (m4);
\end{tikzpicture}
\caption{\label{fig:example_lock_graph}
Lock graph for the program in Fig.~\ref{fig:example} (\texttt{t}, \texttt{m} and \texttt{f} are shorthand for \texttt{thread}, \texttt{main} and \texttt{func2}, respectively).
}
\end{figure}
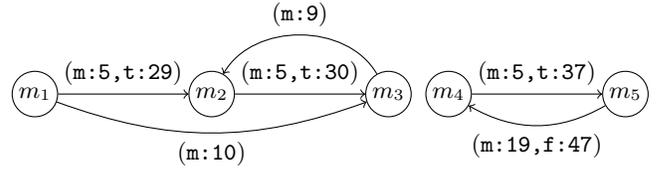





A \emph{lock graph} is a directed graph $L \in 2^{\objs^* \times
  \places \times \objs^*}$ (with $\objs^* = \objs \cup
\{\unknownlock\}$). Each node is a $\lock \in \objs^*$, and an
edge $(\lock_1, \place, \lock_2) \in \objs^* \times
  \places \times \objs^*$ from $\lock_1$ to $\lock_2$ is
labelled with the place $\place$ of the lock operation that acquired $\lock_2$ while $\lock_1$
was owned by the same thread $\getthread(\place)$.  Hence, the
directed edges indicate the order of lock acquisition.
Fig.~\ref{fig:example_lock_graph} gives the lock graph
for the example program in Fig.~\ref{fig:example}.

We use the result of the may lockset analysis
(Sec.~\ref{sec:lockset}) to build the lock graph.
Fig.~\ref{fig:build_lock_graph} gives the lock graph domain
that is instantiated in our analysis framework.
For each lock operation in place $\place$ a thread may acquire a lock
$\lock_2$ corresponding to the value set of the argument to the lock
operation.  This happens while the thread may own any lock $\lock_1$
in the lockset at that place.  Therefore we add an edge
$(\lock_1,\place,\lock_2)$ for each pair $(\lock_1,\lock_2)$.

Finally, we have to handle the indeterminate locks, denoted by
$\unknownlock$.  We~compute the closure $cl(L)$ of the graph w.r.t.~edges
that involve~$\unknownlock$ by adding edges from all predecessors of the
$\unknownlock$~node to all other nodes, and to each successor node of the
$\unknownlock$~node, we add edges from all other nodes.

\begin{figure}[t]
\footnotesize

\hrulefill

\vspace{0.5ex}

Domain: $2^{\objs^* \times \places \times \objs^*}$

\hrulefill

$s_1 \join s_2 = s_1 \cup s_2$

\hrulefill



With $\op(e) = \lockop(a)$:

\vspace{0.5ex}

$
\begin{array}{r@{\,}l}
\transf{e, \place}(s) =
s \cup \{(\lock_1,\place,\lock_2) \mid & \lock_1 \in \maylockset(\place), \\
&  \lock_2\in vs(\place,a) \}  
\end{array}
$

\vspace{0.5ex}

\hrulefill

\vspace{0.5ex}

$
\begin{array}{@{}r@{\,}l@{\,}l}
 \mathit{cl}(s) = s  & \cup \{(\lock_1,\place,\lock) \mid &
(\lock_1,\place,\unknownlock) \in s, \\ && \lock \in
\textsf{get\_locks}(s)\setminus\{\lock_1,\unknownlock\} \}\\ 
& \cup \{(\lock,\place,\lock_2) \mid &
(\unknownlock,\place,\lock_2) \in s, \\ && \lock \in
\textsf{get\_locks}(s)\setminus\{\lock_2,\unknownlock\} \}
\end{array}
$

\vspace{0.5ex}
\hrulefill

\vspace{-1ex}

\caption{\label{fig:build_lock_graph} Lock graph construction}
\end{figure}

\subsection{Checking Cycles in the Lock Graph}\label{sec:cyclecheck}

The final step is to check the cycles in the lock graph.
For this purpose we use the information from the non-concurrency analysis.
Each cycle $c$ in the lock graph could be a potential deadlock.
A cycle $c$ is a set of (distinct) edges; there is a finite number of
such sets.
A cycle is a potential deadlock if 
$
|c|>1 \wedge \textsf{all\_concurrent}(c) 
$
where
%
%
\begin{eqnarray*}
\lefteqn{\textsf{all\_concurrent}(c) \Leftrightarrow }\\
 && \forall (\lock_1,\place,\lock_2),(\lock'_1,\place',\lock'_2) \in c: \\ 
 && \neg \textsf{non\_concurrent}(\place,\place') \vee\\
 && (\textsf{get\_thread}(p) = \textsf{get\_thread}(p') \wedge\\
 && \textsf{multiple\_thread}(\getthread(\place)))
\end{eqnarray*}
%
and $\textsf{multiple\_thread}(t)$ means that $t$ was created
  in a loop or recursion.
%
Due to the use of our non-concurrency analysis we do not require any special
treatment for gate locks or thread segments as in~\cite{ABF+10}.

\section{Experiments}\label{sec:exp}

We implemented our deadlock analyser as a pipeline of static analyses in the
CPROVER framework,%
\footnote{Based on r6268 of \url{http://www.cprover.org/svn/cbmc/trunk}}
and we performed experiments to support the following hypothesis:
\textit{Our analysis handles real-world C code in a precise and efficient~way.}
We used 715 concurrent C programs that contain locks from the Debian GNU/Linux distribution,
with the characteristics shown in Fig.~\ref{tab:statistics}.%
\footnote{Lines of code were measured using \texttt{cloc} 1.53.
}
%
%
%
The table shows that the minimum number of different locks and lock operations
encountered by our analysis was $0$. We found that this is due to a small number
of benchmarks on which the lock operations were not reachable from the main
function of the program (i.e., they were contained in dead code).

We additionally selected 8 programs and introduced
deadlocks in them.
This gives us a benchmark set consisting of 723 benchmarks with 
a total of 7.1\,MLOC. Of these, 715 benchmarks are assumed to
be deadlock-free, and 8 benchmarks are known to have deadlocks.
The experiments were run on a
Xeon X5667 at 3\,GHz
running Fedora 20 with 64-bit binaries.
Memory and CPU time were restricted to 24\,GB and 1800 seconds per
benchmark, respectively.


\vspace*{-1.5ex}
\paragraph{Results}
We correctly report (potential) deadlocks in the 8 benchmarks with
known deadlocks.  The results for the deadlock-free programs are shown
in the following table grouped by benchmark size (t/o\ldots timed out,
m/o\ldots out of memory):


\smallskip
\noindent\hspace*{2ex}\begin{tabular}{c|rrrrrr}
KLOC & analysed & \bf proved & alarms & t/o & m/o \\
\hline
\hphantom{0}0--5\hphantom{0} & 252 & \bf 130 & 55 & 48 & 19 \\
\hphantom{0}5--10 & 269 & \bf 91 & 35 & 108 & 35 \\
10--15 & 86 & \bf 16 & 9 & 54 & 7 \\
15--20 & 93 & \bf 23 & 5 & 64 & 1 \\
20--50 & 15 & \bf 2 & 1 & 11 & 1 \\
\end{tabular}

\smallskip\noindent
For 105 deadlock-free benchmarks, we report alarms
that are most likely spurious. The main reason for such false
alarms is the imprecision of the pointer analysis with respect to
dynamically allocated data structures. This leads to lock operations
on indeterminate locks (see statistics in Fig.~\ref{tab:statistics}).
This is a challenging issue to solve, as we discuss in
Sec.~\ref{sec:concl}.

Scatter plots in Figs.~\ref{fig:locs_time_dead} and~\ref{fig:locs_mem_dead}
illustrate how the tool scales in terms of running time and memory
consumption with respect to the number of lines of code.  The tool
successfully analyzed programs with up to 40K lines of code.  As~the plots
show, the asymptotic behaviour of the algorithms in terms of lines of code
is difficult to predict since it mostly depends on the complexity of the
pointer analysis.

We evaluated the impact of the different analysis features on a random
selection of 83 benchmarks and break down the running times into the
different analysis phases on those benchmarks where the tool does not
time out or goes out of memory:
We found that the dependency analysis is effective at decreasing both the
memory consumption and the runtime of the pointer analysis. It decreased the
memory consumption by 27\% and the runtime by 60\% on average.
%
%
We observed that still the vast majority of the
running time (93\%) of our tool is spent in the pointer analysis, which is due to
the often large number of general memory objects, including all heap
and stack objects that may contain locks. May lock analysis (3\%),
must lock analyisis (2\%), and lock graph construction (2\%) take less
time; the run times for the dependency analysis and the cycle checking
(up to the first potential deadlock) are negligible.  For 80.2\% of
lock operations the must lock analysis was precise.

\vspace*{-1.5ex}
\paragraph{Comparison with other tools}
We tried to find other tools to experimentally compare with ours. However,
we did not find a tool that handles C code and with which a reasonable
comparison could be made.  Other tools are either for Java (such as
\cite{NPSG09}), are based solely on testing (Helgrind~\cite{helgrind}), or
are semi-automatic lightweight approaches relying on user-supplied
annotations (LockLint~\cite{locklint}).


\begin{figure}[t]
\begin{subfigure}{0.45\textwidth}
\centering
\begin{tikzpicture}[scale=0.7]
\begin{axis} [ymode=log]
\addplot [mark size=1,only marks] table {locs_time_no_deadlocks.csv};
\end{axis}
\end{tikzpicture}
\caption{LOC vs.~user time (timeout 1800\,s)\label{fig:locs_time_dead}
\vspace{0.2cm}} 
\end{subfigure}
\begin{subfigure}{0.45\textwidth}
\centering
\begin{tikzpicture}[scale=0.7]
\begin{axis} [ymode=log]
\addplot [mark size=1,only marks] table {locs_mem_no_deadlocks.csv};
\end{axis}
\end{tikzpicture}
\caption{\label{fig:locs_mem_dead}%
LOC vs.~memory consumption (memory limit 24\,GB)}
%
%
\end{subfigure}
\vspace{-1ex}
\caption{Experimental results}
\end{figure}
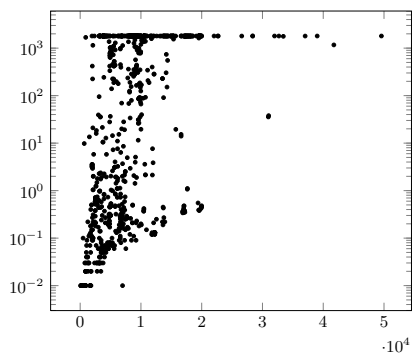
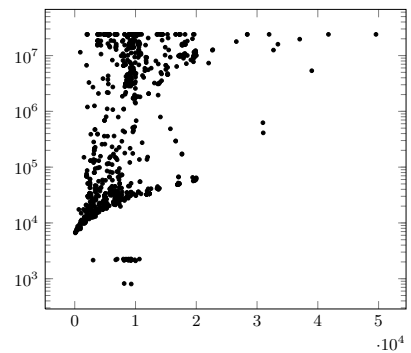


\begin{figure*}[t]
\scriptsize
\centering

\begin{subfigure}{0.5\textwidth}
\centering
    \begin{tabular}{|l||r|r|r|}
\hline
      & max & avg & min \\
    \hline
    \# lines of code	                    & 41,749 & 8,376.5 & 86\\
    \# Threads                   & 163  & 3.8 & 1 \\
    \# Threads in loop & 162 & 2.1 & 0\\
    \# Locks                     & 16 & 1.5 & 0\\
    \# Lock operations           & 30773 & 331.6 & 0\\
\hline
Precise must analysis                & 100\% & 80.2\% & 0\%  \\
Size of largest lockset                     & 8.0 & 1.2 & 1.0  \\
\# indeterminate locking operations   & 2106.0 & 32.3 & 0.0 \\
\# non-concurrency checks      & 450.0 & 1.6 & 0.0 \\
    \hline
    \end{tabular}
\end{subfigure}%
\begin{subfigure}{0.5\textwidth}
\centering
\begin{tabular}{|l|rrr|}
\hline
 & max & avg & min   \\
\hline
Total analysis time (s)                        & 1291.3 & 435.3 & 0.0\\
\hspace*{1em}Dependency analysis                & 5.0 & 0.1 & 0.0  \\
\hspace*{1em}Pointer analysis                & 1185.1 & 419.0 & 0.0  \\
\hspace*{1em}May lockset analysis           & 345.4 & 12.5 & 0.0 \\
\hspace*{1em}Must lockset analysis          & 29.9 & 1.1 & 0.0 \\
\hspace*{1em}Lock graph construction         & 31.3 & 1.2 & 0.0  \\
\hspace*{1em}Cycles detection                 & 327.4 & 1.4 & 0.0 \\
\hline
Peak memory (GB)                             & 24.0 & 6.3 & 0.008  \\
\hline
\end{tabular}
\end{subfigure}
\vspace{-1ex}
\caption{\label{tab:statistics}
Benchmark characteristics and analysis statistics 
(for the 367 benchmarks with no time out or out of memory)}
\end{figure*}

\vspace*{-1.5ex}
\section{Threats to Validity}
\label{sec:threats}


This section discusses the threats to internal and external validity of our
results and the strategies we have employed to mitigate them~\cite{feldt:2010}.

The main threat to internal validity concerns the correctness of our
implementation. To mitigate this threat we have continually tested our tool 
during the implementation phase, which has resulted in a testsuite of 122
system-, unit-, and regression-tests. To further test the soundness claim of our
tool on larger programs, we introduced deadlocks into 8 previously deadlock-free
programs, and checked that our tool correctly detected them. While we
have reused existing locks and lock operations to create those deadlocks, they
might nevertheless not correspond well to deadlocks inadvertently introduced by
programmers.

The threats to external validity concern the generalisability of our results
to other benchmarks and programming languages.  Our benchmarks have been
drawn from a collection of open-source C programs from the Debian GNU/Linux
distribution~\cite{kroening:2014} that use pthreads and contain lock
operations, from which we ran most of the smaller ones and some larger ones. 
We found that the benchmark set contains a diverse set of programs. 
However, we did not evaluate our tool on embedded (safety- or
mission-critical) software, a field that we see as a prime application area
of a sound tool like ours, due to the ability to verify the absence of
errors.

During the experiments we have used a timeout of 1800\,s. This in particular
means that we cannot say how the tool would fare for users willing to invest
more time and computing power; in particular, the false positive rate could
increase as programs that take a long time to analyse are likely larger and
could have more potential for deadlocks to occur.

Finally, our results might not generalise to other programming
languages. For example, while Naik et al.~\cite{NPSG09} (analysing
Java) found that the ability to detected common locks was crucial to
lower the false positive rate, we found that it had little effect in
our setting, since most programs do not acquire more than 2 locks in a
nested manner (see Tab.~\ref{tab:statistics}, size of largest
lockset).




\vspace{-1.5ex}
\section{Related Work}
\label{sec:related}

Deadlock analysis is an active research area.  Since
the mid-1990s numerous tools, mainly for C and Java, have been
developed. One can distinguish dynamic and static approaches.  A
common deficiency is that all these tools are neither sound nor
complete, and produce false positives and negatives. 


\vspace*{-1.3ex}
\paragraph{Dynamic tools}
The development of the Java PathFinder tool~\cite{Hav00,ABF+10} led to
ground-breaking work over more than a decade to find lock acquisition
hierarchy violation with the help of lock graphs, exposing the issue
of gatelocks, and segmentation techniques to handle different threads
running in parallel at different times.  \cite{AS06,JPSN09,Sor15} try
to predict deadlocks in executions similar to the observed one.
DeadlockFuzzer~\cite{JPSN09} use a fuzzying technique to search for
deadlocking executions.
Multicore SDK~\cite{LDQ11} tries to reduce the size of lock graphs by
clustering locks; and Magiclock~\cite{CC14} implements significant
improvements on the cycle detection algorithms.
Helgrind~\cite{helgrind}
is a popular open source dynamic deadlock detection tool, 
and there are many commercial implementations of dynamic deadlock
detection algorithms.

\vspace*{-1.3ex}
\paragraph{Static tools} 
There are very few static deadlock analysis tools for C.
LockLint~\cite{locklint}
relies on a user-supplied lock acquisition order, and is thus not
fully automatic.
RacerX~\cite{EA03} focuses more on fast analysis of large code bases
than soundness. It performs a path- and context-sensitive analysis,
but its pointer analysis is very rudimentary.
For Java there is Jlint2~\cite{AB01}, a tool similar to LockLint.
The tool Jade~\cite{NPSG09} consciously uses a may analysis 
instead of a must analysis, which causes unsoundness.
The tools presented in \cite{WTE05} and
\cite{Pra04} do not consider gatelock scenarios, which leads
to false alarms.

\vspace*{-1.3ex}
\paragraph{Other tools} 
Some tools combine dynamic approaches and constraint solving. For
example, CheckMate~\cite{JNSG10} model-checks a path along an observed
execution of a multi-threaded Java program; Sherlock~\cite{EP14} uses
concolic testing; and \cite{Sor15,AS06} monitor runtime executions of
Java programs. There are related techniques to detect synchronisation
defects due to blocking communication, e.g.~in message passing (MPI)
programs~\cite{FKNS14,CLC+12}, for the modelling languages BIP
(DFinder tool~\cite{BBNS09,BGL+11}) or ABS (DECO tool~\cite{FAG13})
that use similar techniques based on lock graphs and
may-happen-in-parallel information.

\vspace*{-1.3ex}
\paragraph{Dependency analysis}

Our dependency analysis is related to work on concurrent program
slicing~\cite{krinke:1998,krinke:2003,nanda:2000} and alias
analysis~\cite{burke:1994,kahlon:2007}.  Our analysis is more lightweight
than existing approaches as it works on the level of variable identifiers
only, as opposed to more complex objects such as program dependence graphs
(PDG) or representations of possible memory layouts.  Moreover, our analysis
disregards expressions occuring in control flow statements (such as
if-statements) as these are not relevant to the following pointer analysis
which consumes the result of the dependency analysis.  The analysis thus
does not produce an \emph{executable subset} of the program statements as in
the original definition of slicing by Weiser~\cite{weiser:1981}.

\vspace*{-1.3ex}
\paragraph{Non-concurrency analysis} 

Our non-concurrency analysis is context-sensitive, works on-demand, and can
classify places as non-concurrent based on locksets or create/join. 
Locksets have been used in a similar way in static data race
detection~\cite{EA03}, and Havelund~\cite{ABF+10} used locksets in dynamic
deadlock detection to identify non-concurrent lock statements.  Our handling
of create/join is most closely related to the work of Albert et
al.~\cite{albert:2012}.  They consider a language with asynchronous method
calls and an await statement that allows to wait for the completion of a
previous call.  Their analysis works in two phases, the second of which can
be performed on-demand, and also provides a form of context-sensitivity. 
Other approaches, which however do not work on-demand, include the work of
Masticola and Ryder~\cite{masticola:1993} and Naumovich et
al.~\cite{naumovich:1998} for ADA, and the work of Lee et
al.~\cite{lee:2012} for async-finish parallelism.

\vspace*{-1.2ex}
\section{Conclusions}\label{sec:concl}

We presented a new static deadlock analysis approach for concurrent C/pthreads
programs.
We demonstrated that our tool can effectively prove deadlock freedom
of 2.6\,MLOC concurrent C code from the Debian GNU/Linux distribution.
Our experiments show that the pointer analysis is the crucial
component of a static deadlock analyser. It takes most of the time, by
far, and is the primary source for false alarms when the arguments of
lock operations cannot be determined.
In our evaluation, we observed that the limitations of our pointer analysis
regarding dynamically allocated data structures, e.g.~when lock
objects are stored in lists, are the key reason 
for false alarms. Future work will focus on addressing this limitation.
Moreover, we will integrate the analysis of pthread
synchronisation primitives other than mutexes, e.g.~condition
variables, and extend our algorithm to Java synchronisation constructs.
We also want to go beyond lock hierarchy violations,
and will include the information from a termination analysis to 
detect loop-related deadlocks.
All this is part of a larger endeavour to show synchronisation correctness
and deadlock freedom of
all concurrent C programs that use locks (currently 3748
programs with 264.5\,MLOC) in the Debian GNU/Linux distribution.

\section{Acknowledgments}

This work is supported by ERC project 280053
and SRC task 2269.002.

\bibliographystyle{abbrv}
\bibliography{biblio}


\newpage

\begin{theappendix}

\appendix

\section{Flow-Insensitive Framework}
\label{sec:framework_fi}

In the flow-insensitive framework, the topmost location of a place always
corresponds to the entry point of a function. That is, we associate a sound
overapproximation of the data flow facts that hold for all locations in the
function with the entry point of the function. Fig.~\ref{fig:framework_fi}
gives the formalization of the context- and thread-sensitive flow-insensitive
framework. It reuses many definitions from the flow-sensitive formalization of
Fig.~\ref{fig:framework}.

The result of the flow-insensitive analysis is
defined as the least fixpoint of the following equation:


\hspace{-4ex}
\begin{tikzpicture}
\node[right] (a) at (0, 0) {
\begin{minipage}{\linewidth}
\begin{flalign*}
s = s_0~\join~&\lambda\,p. \transfi{\getfunc(p), p}(s(p))~\join_c\\
              &\bigsqcup_{p', e~\text{s.t.} \textsf{np}(p, p', e)} \transfi{e, p'}(s(p'))
\end{flalign*}
\end{minipage}
};
\node (c) at (4.26, -0.45) {\scriptsize$c$};
\end{tikzpicture}

\vspace{-3ex}

\begin{flalign*}
\text{with}~~\textsf{np}(p, p', (\loc_1, \loc_2))~=
                              ~&\getfunc(\loc_1) \neq \getfunc(\loc_2)\\
                              ~&\entryloc(\loc_1) = \textsf{top}(p')\,\wedge\\
                              ~&\entryloc(\loc_2) = \textsf{top}(p)\,\wedge\\
                              ~&\placenexti((\loc_1, \loc_2), p') = p\\[1ex]
\text{with}~~s \join s'~=~&\lambda\, p.\, s(p) \join_c s'(p)
\end{flalign*}


\begin{figure}[t]
\normalsize

\hrulefill
\vspace{1ex}

Domain: $\domi = \fpms \times \doma$

\vspace{-0.5ex}
\hrulefill
\vspace{-3ex}


\begin{align*}
\statei^{1} \joini \statei^{2} = & \statei^{1} \joins \statei^{2}
\end{align*}

\vspace{-2ex}


\hrulefill
\vspace{1ex}




With $\cfaedge = (\loc_1, \loc_2)$, $\placetop(\place) = \entryloc(\loc_1)$, $f = \getfunc(\loc_2)$, and $n = |p|$:\\

$\placenexti(\cfaedge, \place) =$
\[
\begin{cases}
  \entry_s(\place, \loc_2)         &\op(\cfaedge) \in \{\threadentry, \funcentry\}\\
  \place[:n-2] + \entryloc(\loc_2) &\op(\cfaedge) \in \{\funcexit, \threadexit,\\
  &\hspace{9.7ex}\threadjoin\}\\
  \place[:n-1]                     &\text{otherwise}
\end{cases}
\]






\vspace{-0.5ex}
\hrulefill
\vspace{-1.5ex}

\begin{tikzpicture}
\node[right] (a) at (0, 0) {
\begin{minipage}{\linewidth}
\begin{flalign*}
\transfi{f, p} =~~~~~\bigcirc~~~~~\transfi{e, p}
\end{flalign*}
\end{minipage}
};
\node (c) at (4.55, -0.52) {\scriptsize$e\,s.t.\,P(f, e)$};
\end{tikzpicture}

\vspace{-1.5ex}
\hrulefill
\vspace{1ex}


$
\transfi{\cfaedge, \place}(\statei) = \transfs{\cfaedge, \place}(\statei)
$

\vspace{0ex}
\hrulefill

\caption{\label{fig:framework_fi}
Context-, thread-, and flow-insensitive framework}
\end{figure}

\section{Non-Concurrency Analysis}
\label{sec:non-concurrency-app}

We describe how the analysis determines whether two places $\place_1, \place_2$
are non-concurrent due to the relationship between threads arising from create
and join operations.

The analysis is based on performing a graph search in the ICFA. It makes use
of three basic functions on directed graphs.  The function $\haspath(\loc_1,
\loc_2)$ returns true when there is a path in the ICFA between locations
$l_1$ and $l_2$.  The function $\onallpaths(\loc_1, \loc_2, \loc_3)$ returns
true when all paths in the ICFA from $\loc_1$ to $\loc_3$ pass through
$\loc_2$.  It is implemented by computing the set of dominators of $\loc_3$
(assuming $\loc_1$ to be the entry point) and then checking whether $\loc_2$
is contained in that set.  The function $\inloop(l)$ returns true when there
is a path in the ICFA that starts and ends in~$\loc$.

\paragraph{Algorithm}

We explain the algorithm on an example (Fig.~\ref{fig:nc_example}). The
example consists of four threads (including the main thread). We want to
determine whether the statements \texttt{x=1} and \texttt{x=2} are
non-concurrent. We see that they cannot run concurrently as
\texttt{main()} joins with \texttt{thread1()} before starting
\texttt{thread3()} and \texttt{thread1()} joins with \texttt{thread2()}
before returning.

Let us now look at how our algorithm establishes this fact. The algorithm is
called with places $p_1 = (\nolink{\ref{nc:main:create1}},
\nolink{\ref{nc:thread1:create}}, \nolink{\ref{nc:thread2:x1}})$ and $p_2 =
(\nolink{\ref{nc:main:create2}}, \nolink{\ref{nc:thread3:x2}})$.  We have no
locks in the example and hence the must locksets are empty (line 3).  Line 5
determines the length of the longest common prefix of $p_1$ and $p_2$ (which
is 0 in this case).  This is the starting point for the exploration.

The algorithm then checks whether there is a path from
\nolink{\ref{nc:main:create1}} to \nolink{\ref{nc:main:create2}} (line 8). 
If there would not be a path from either \nolink{\ref{nc:main:create1}} to
\nolink{\ref{nc:main:create2}} or \nolink{\ref{nc:main:create2}} to
\nolink{\ref{nc:main:create1}} this would mean that
\nolink{\ref{nc:main:create1}} and \nolink{\ref{nc:main:create2}} occur in
conflicting branches (e.g., one in the then- and the other in the
else-branch of an if statement) and thus the places could not be concurrent. 
In the current case there is a path from \nolink{\ref{nc:main:create1}} to
\nolink{\ref{nc:main:create2}}.

The algorithm then invokes $\mathit{unwind()}$, which checks that the
threads that are created to reach place $\place_1$ are all joined before
location \nolink{\ref{nc:main:create2}} is reached.  It does so by iterating
over $\place$ starting from the end.  The operation $\placetop(\place)$
returns the last element of $\place$, and the operation $\pop(\place)$
returns $\place$ with the last element removed.

The variable $\mathit{joined}$ indicates whether the last thread that was
created has been joined yet.  If the top element of $\place$ corresponds to
a create operation (line 10), then if $\mathit{joined}$ is $\false$ the
function returns $\false$.  If not, then $\mathit{joined}$ is set to
$\false$, and the place corresponding to the create is recorded in $p_c$. 
Then, a matching join for the create is searched (line 16) by invoking the
$\mathit{find()}$ function.

The function $\mathit{find}(\place_c, \place, \loc_1, \loc_2)$ takes the
place~$\place_c$, a place~$\place$, and locations $\loc_1$ and $\loc_2$. 
The locations $\loc_1$ and $\loc_2$ are in the same function (let $f$ =
$\getfunc(l_1)$), and the place $p$ has as top element the call to the
function $f$.  The function $\mathit{find()}$ looks for a matching join to
the create at place~$p_c$.  It does so by looking in the function $f$ (lines
3--6), and (recursively) in the callees of $f$ (lines 7--14).  The join must
occur on all paths between $\loc_1$ and $\loc_2$ (lines 5, 9).  The call
$\mathit{match}(\place_c, \place + \loc_{join})$ checks that the join at
place $\place + \loc_{join}$ matches the create at~$\place_c$ (i.e., the
thread ID returned by the \texttt{pthread\_create()} is the same as the one
passed to the \texttt{pthread\_join()}).


If the creation site is in a loop, we additionally ensure
that the thread is joined also on each path that goes back to the same location
(lines 17--20). The final lines 21--31 are like the loop body and handle the
locations $\loc_1$ and $\loc_2$.

For our example, for the first iteration of the while loop in line 4 we have
$p = (\nolink{\ref{nc:main:create1}}, \nolink{\ref{nc:thread1:create}}, \nolink{\ref{nc:thread2:x1}})$.
In this case, $\nolink{\ref{nc:thread2:x1}} \notin \mathit{create\_locs} \wedge \mathit{joind}$ in line 7 and we thus continue with the next iteration. 
Now we have $p = (\nolink{\ref{nc:main:create1}}, \nolink{\ref{nc:thread1:create}})$ and $\nolink{\ref{nc:thread1:create}} \in \mathit{create\_locs}$ and
thus set $\mathit{joined}$ to $\false$ and record $p_c = (\nolink{\ref{nc:main:create1}}, \nolink{\ref{nc:thread1:create}})$.
The invocation of $\mathit{find()}$ (line
16) finds the join in line \nolink{\ref{nc:thread1:join}}, and thus $\mathit{joined}$ is set to $\true$. The while loop
then terminates as $|(\nolink{\ref{nc:main:create1}})| \le i+1$. Lines 21--31 then look for a matching
join for the create at location \nolink{\ref{nc:main:create1}}. Again such a join is found and $\mathit{unwind()}$
returns $\true$. The algorithm thus overall returns $\true$.

\paragraph{Evaluation}

We have evaluated the non-concurrency analysis on a subset of 100 benchmarks of
the benchmarks described
in Sec.~\ref{sec:exp}. For each benchmark we randomly selected 1000 pairs of
places $(\place_1, \place_2)$ such that $\place_1$ and $\place_2$ correspond to
different threads. We then performed the non-concurrency check for each of the
1000 pairs. The results are given in the table below.

\vspace{2ex}





\noindent
\begin{minipage}{\linewidth}
\scriptsize
\newcolumntype{Y}{>{\raggedleft\arraybackslash}X}
\begin{tabularx}{\linewidth}{p{2cm}YYp{2cm}}
\toprule
& runtime & n.c.~places & n.c.~lock places\\
\midrule
25th percentile & $0.14$s~~ & 47\%~~ & \hspace{11ex}11\%\\
arithmetic mean            & $4.07$s~~ & 60\%~~ & \hspace{11ex}43\%\\
75th percentile & $4.56$s~~ & 79\%~~ & \hspace{11ex}67\%\\
\bottomrule
\end{tabularx}
\end{minipage}

\vspace{2ex}

\noindent
The table shows that the average
time (over all benchmarks) it took to perform 1000 non-concurrency checks was
$4.07$s. The first and last line give the 25th and 75th percentile. This
indicates for example that for $25\%$ of the benchmarks it took $0.14$s or less
to perform 1000 non-concurrency checks. The third and fourth column evaluate the
effectiveness of the non-concurrency analysis. The third column shows that on
average our analysis classified $60\%$ of the place pairs as non-concurrent. The
fourth column gives the same property while only regarding places that
correspond to lock operations. The number of places classified as non-concurrent
is lower in this case, which is expected as the code portions using locks are
those that can run concurrently with others.
Overall, the data shows that the non-concurrency
analysis is both fast and effective.

\begin{figure}[t]
\scriptsize
\begin{subfigure}{0.55\linewidth}
\begin{lstlisting}[countblanklines=false,mathescape]
int main()
{
  pthread_t tid1;
  pthread_t tid3;
  pthread_create(|\label{nc:main:create1}|
    &tid1, 0, thread1, 0);
  pthread_join(tid1);|\label{nc:main:join}|
  pthread_create(|\label{nc:main:create2}|
    &tid3, 0, thread3, 0);
  return 0;
}
\end{lstlisting}
\end{subfigure}%
\begin{subfigure}{0.5\linewidth}
\begin{lstlisting}[mathescape,countblanklines=false,firstnumber=12]
void *thread1() {
  pthread_t tid2;
  pthread_create(|\label{nc:thread1:create}|
    &tid2, 0, thread2, 0);
  pthread_join(tid2, 0);|\label{nc:thread1:join}|
  return 0;
}

void *thread2() {
  x = 1;|\label{nc:thread2:x1}|
  return 0;
}

void *thread3() {
  x = 2;|\label{nc:thread3:x2}|
  return 0;
}
\end{lstlisting}
\end{subfigure}%
\caption{\label{fig:nc_example}
Statements \texttt{x=1} and \texttt{x=2} are non-concurrent}
\end{figure}

\begin{algorithm}[h]
\footnotesize
\DontPrintSemicolon
\SetKwInOut{Input}{Input}
\SetKwProg{Fn}{function}{}{}

\Fn{$\mathit{find}(\place_c, \place, \loc_1, \loc_2)$}{
$f \gets \getfunc(\loc_1)$\;
$\mathit{join\_locs} \gets \{\loc \in \locs(f)~|~\exists e = (\loc, \_)\colon \textsf{is\_join}(\op(e))\}$\;

\ForEach{\upshape $\loc_{join} \in join\_locs$}{
  \If{\upshape \onallpaths($\loc_1, \loc_{join}, \loc_2$) $\wedge$ \match($\place_c, \place + \loc_{join}$)}{
    \Return{$\true$}\;
  }
}

$\mathit{\mathit{entry\_edges}} \gets \{e = (\loc_{src}, \loc_{tgt}) ~|~ \getfunc(\loc_{src}) = f \wedge \op(e) = \funcentry\}$\; 


\ForEach{$(\loc_{src}, \loc_{tgt}) \in \mathit{entry\_edges}$}{
  \If{\upshape \onallpaths($\loc_1, \loc_{src}, \loc_2)$}{
    $p' \gets p + \loc_{src}$\;
    $f' \gets \getfunc(\loc_{tgt})$\;
    $r \gets find(\newline \hphantom{xxxxxx} \place_c, \newline \hphantom{xxxxxx} p', \newline \hphantom{xxxxxx} \loc_{tgt}, \newline \hphantom{xxxxxx} \exitloc(f'))$\;
    \If{$r$}{
      \Return{true}
    }
  }
}

\Return{false}

}
\caption{Find join}
\end{algorithm}

%
\begin{algorithm}
\footnotesize
\DontPrintSemicolon
\SetKwInOut{Input}{Input}
\SetKwProg{Fn}{function}{}{}

\Fn{$\mathit{unwind}(i, \place, \loc_1, \loc_2)$}{
$\mathit{create\_locs} \gets \{ \loc~|~ \exists e=(\loc, \_)\colon \op(e) = \threadentry\}$\;
$\mathit{joined} \gets true$\;
\While{\upshape $|p| > i+1$}{
$\loc \gets \placetop(\place)$\;
$f \gets \getfunc(\loc)$\;

  \If{\upshape $\loc \notin \mathit{create\_locs} \wedge \mathit{joined}$}{
    $p \gets \pop(\place)$\;
    \textbf{continue}\;
  }
  
  \If{\upshape $\loc \in \mathit{create\_locs}$}{
    \If{\upshape $\neg \mathit{joined}$}{
      \Return{false}
    }
    $\mathit{joined} \gets false$\;
    $\place_c \gets p$\;
  }

  $\place \gets \pop(\place)$\;
  $\mathit{joined} \gets \mathit{find}(\place_c, \place, \loc, \exitloc(f))$\;
  
  \If{\upshape \inloop($\loc$)}{
    $\mathit{loop\_joined} \gets \mathit{find}(\place_c, \place, \loc, \loc)$\;
    \If{$\neg \mathit{joined} \vee \neg \mathit{loop\_joined}$}{
      \Return{false}
    }
  }
}
  \If{\upshape $\loc_1 \in \mathit{create\_locs}$}{
    \If{\upshape $\neg \mathit{joined}$}{
      \Return{false}
    }
    $\mathit{joined} \gets false$\;
    $\place_c \gets \place$\;
  }

  \If{\upshape $\neg \mathit{joined}$}{
    $\place \gets \pop(\place)$\;
    $\mathit{joined} \gets \mathit{find}(\place_c, \place, \loc_1, \loc_2)$\;

    \If{\upshape $\inloop(\loc_1)$}{
      $\mathit{loop\_joined} \gets \mathit{find}(\place_c, \place, \loc_1, \loc_1)$\;
      \Return{$\mathit{joined} \wedge \mathit{loop\_joined}$}\;
    }
  }

  \Return{$\mathit{joined}$}
}
\caption{Unwind}
\end{algorithm}

\section{Must Lockset Analysis}
\label{sec:must_lockset_analysis_app}

Fig.~\ref{fig:must_lock_set_analysis} gives a formalisation of the must lockset
analysis. The must locksets can never contain the value $\unknownlock$.

\begin{figure}[t]
\footnotesize

\hrulefill

Domain: $2^{\objs} \cup \{\{\unknownlock\}\}$

\hrulefill

$s_1 \join s_2 = s_1 \cap s_2$

\hrulefill



With $\op(e) = \lockop(a)$:

$
\transf{e, p}(s) =
\begin{cases}
s \cup vs(p, a) & \text{if}~|vs(p, a)|=1 \wedge vs(p, a) \neq \{\unknownlock\}\\
s               & \text{otherwise}\\
\end{cases}
$

\vspace{2ex}

With $\op(e) = \unlockop(a)$:

$
\transf{e, p}(s) =
\begin{cases}
s - vs(p, a) & \text{if}~vs(p, a) \neq \{\unknownlock\}\\
\emptyset    & \text{otherwise}\\
\end{cases}
$

\vspace{2ex}

With $\op(e) \in \{\threadentry, \threadexit, \threadjoin\}$:

\vspace{1.5ex}

$
\transf{e, p}(s) = \emptyset
$

\vspace{1ex}
\hrulefill

\caption{\label{fig:must_lock_set_analysis}Must lockset analysis}
\end{figure}

\section{Framework Implementation}\label{sec:implementation}

Fig.~\ref{fig:trie_numbering} gives an example of a place map which contains the
mappings $(\loc_1) \leftrightarrow 1, (\loc_1, \loc_2) \leftrightarrow 0$, and $(\loc_3) \leftrightarrow 2$.
Unlike in an ordinary trie, in our implementation the nodes also have pointers
to their parent (dashed arrows). This allows to reconstruct a place from a
pointer to a node in the trie. For example, by starting from the leaf node
labeled with $0$ we can traverse the parent edges backwards to get the place $(\loc_1, \loc_2)$.

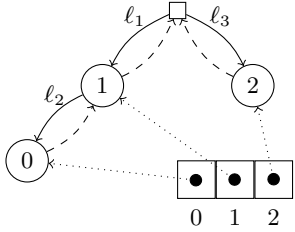
\begin{figure}

\centering

\begin{tikzpicture}

\node[rectangle, draw] (n1) at (0, 0) {};
\node[circle, draw] (n2) at (-1, -1) {1};
\node[circle, draw] (n3) at (1, -1) {2};
\node[circle, draw] (n4) at (-2, -2) {0};

\draw[->] (n1) edge[out=-135-20, in=45+20] node[above] {$\ell_1$} (n2);
\draw[->, dashed] (n2) edge[out=45-20, in=-135+20] (n1);

\draw[->] (n1) edge[out=-45+20, in=135-20] node[above] {$\ell_3$} (n3);
\draw[->, dashed] (n3) edge[out=135+20, in=-45-20] (n1);

\draw[->] (n2) edge[out=-135-20, in=45+20] node[above] {$\ell_2$} (n4);
\draw[->, dashed] (n4) edge[out=45-20, in=-135+20] (n2);

\begin{scope}[list/.style={draw, rectangle}, start chain, node distance=0pt,
inner sep=0.25cm, yshift=-2.25cm, xshift=0.25cm]

\node[list, on chain] (A) {};
\node[list, on chain] (B) {};
\node[list, on chain] (C) {};

\node[below of=A, yshift=-0.5cm] {0};
\node[below of=B, yshift=-0.5cm] {1};
\node[below of=C, yshift=-0.5cm] {2};

\node[inner sep=0, xshift=0.1cm, yshift=-0.02cm] (A1) {};
\node[inner sep=0, xshift=0.6cm, yshift=-0.065cm] (B1) {};
\node[inner sep=0, xshift=1.05cm, yshift=-0.1cm] (C1) {};

\draw[*->, dotted] (A1) -- (n4);
\draw[*->, dotted] (B1) -- (n2);
\draw[*->, dotted] (C1) -- (n3);

\end{scope}

\end{tikzpicture}

\vspace{-1ex}
\caption{Trie- and array-based place map data structure, representing the two way mapping $(\loc_1) \leftrightarrow 1$,
$(\loc_1, \loc_2) \leftrightarrow 0$, $\loc_3 \leftrightarrow 2$}
\label{fig:trie_numbering}
\end{figure}


\section{Correctness Proofs}\label{sec:proofs}

In this section, we show the correctness of our deadlock analysis approach. We
show that the may locksets computed for each place overapproximate the sets of
locks a thread may hold in a concrete execution at that place. Based on the
correctness of the may locksets, we then show that our lock graph
overapproximates the potential lock allocation graphs (LAGs). That is, we show
that for each execution prefix of the program for which there exists a cycle in
the LAG, there also is a cycle in the lock graph. We assume the
correctness of the pointer analysis and the correctness of the non-concurrency
analysis.

\subsection{Preliminaries}

In Sec.~\ref{sec:framework_details} we have formulated our analyses as a
fixpoint computation over the ICFA. An analysis computes data flow facts for
each \emph{place} (see the fixpoint equation in
Sec.~\ref{sec:framework_details}). The analysis can also be viewed in a slightly
different way: as computing a fixpoint over a larger structure that we term
\emph{context-sensitive control-flow automaton} (CCFA). The CCFA for a program
is just like the ICFA, but with the nodes being places rather than locations.
Two places $\place, \place'$ are connected by an edge if $\placetop(\place)$ and
$\placetop(\place')$ are connected by an edge in the corresponding ICFA.

We denote a concrete \emph{execution} (or execution prefix) of a program as an
interleaving $E = (\place_1, t_1, o_1)(\place_2, t_2, o_2)\ldots\allowbreak(\place_n, t_n,
o_n)$. The $\place_i$ are places, the $t_i$ are concrete thread IDs, and the
$o_i$ are execution instances of the operations with which the edges connecting
the places $\place_i, \place_{i+1}$ are labeled. We refer to the individual
tuples that make up $E$ as \emph{steps}.
We use
array subscript notation (0-based) to refer to individual elements of lists and
tuples. For example, $E[0][2]$ refers to the third component of the first tuple
of execution $E$. We further use slice notation to refer to contiguous
subsequences of executions. We further use slice notation (e.g., $E[n\colon\! m]$) for
contiguous subsequences of executions (see Sec.~\ref{sec:threadsensitive}).

We denote an \emph{execution path} (or an execution path prefix) of a program as $E_p =
(\place_1, \mathit{op}_1)(\place_2, \mathit{op}_2)\ldots(\place_n,
\mathit{op}_n)$. An execution path of a program is a path through its CCFA,
starting at the entry point. Here
the $\mathit{op}_i$ are the operations with which the edges connecting the
places are labeled (rather than concrete instances of those operations).

Given a thread ID~$t$ (resp.~abstract thread ID~$t'$), we denote by $E|t$
(resp.~$E_p|t'$) the executions (resp.~execution paths) projected to the
steps of the given thread $t$ (resp.~abstract thread $t'$).  We further
denote by $T(E) = \{ t~|~(\_, t, \_) \in E \}$ the set of threads in
execution $E$.\footnote{We use the same notation ($\in$) to denote elements
of sets and lists.} A thread always starts with a $\threadentry$ operation,
thus we have $(E|t)[0][2] = \threadentry(\ldots)$ and $(E_p|t')[0][1] =\allowbreak
\threadentry(\ldots)$.

\begin{property}\label{prop:basic}
Let $E$ be an execution prefix and let $E' = E|t$ ($n = |E'|$) be the execution
of some thread $t$ in $E$. Then, there is an execution path prefix $E_p$ ($m = |E_p|$)
with

\vspace{-3ex}

$$E_p[m-n\colon\!]|_{(p, \_)\mapsto p} = E'|_{(p, \_, \_)\mapsto p}$$
\end{property}







\noindent
The property holds by the shape of the CCFA (which directly derives from the
shape of the ICFA).
The property states that for each execution $E$ and thread $t$ in $E$, there is
an execution path prefix $E_p$ such that the sequence of places visited by $t$
and the sequence of places visited by the suffix of $E_p$ are the same.



We next define a concretisation function which states how the result of the
pointer analysis (and hence also the static locksets) are interpreted. 

\begin{definition}
Let $A$ be the set of all locks that may be held in any execution of a program.
Then:\\[-2pt]

$
c(\mathit{ls}) =
\begin{cases}
A           & \mathit{ls} = \{\star\}\\
\mathit{ls} & \text{otherwise}\\
\end{cases}
$

\end{definition}





\noindent
The function satisfies the property $c(\mathit{ls}_1) \cup c(\mathit{ls}_2) =
c(\mathit{ls}_1 \cup \mathit{ls}_2)$.
We are now in a position to prove the correctness of the may lockset analysis.

\subsection{May lockset correctness}

If $E = \allowbreak(\place_1, t_1, o_1)\allowbreak\ldots(\place_n, t_n, o_n)$ is an execution prefix
we denote by $\mathit{ls}_c(E)$ the concrete set of locks before executing the
final step $(\place_n, t_n, o_n)$. This is the set of locks held by thread $t_n$
at that step. A thread starts with an empty set of locks held.

\begin{theorem}
Let $\prog$ be a program and let $E = \allowbreak(\place_1, t_1, o_1)\allowbreak\ldots(p_n, t_n, o_n)$
be an execution prefix of $\prog$. Then:\\[-2pt]

\hspace*{4ex}$\mathit{ls}_c(E) \subseteq c(\mathit{ls}_a(p_n))$.
\end{theorem}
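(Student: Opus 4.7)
The plan is to prove the claim by induction on the number of steps executed by thread $t_n$ in the prefix $E$, using Property~\ref{prop:basic} to lift each step of $t_n$ to an edge of the CCFA, and relying on the least-fixpoint characterisation from Sec.~\ref{sec:framework_details} so that $\mathit{ls}_a(p_n)$ absorbs every per-predecessor transfer contribution; it therefore suffices to establish the bound along the specific CCFA path that $t_n$ actually traverses.

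First I would strengthen the statement into an invariant that holds after every step of every thread: for every execution prefix $E'$ whose last step is executed by some thread $t$ at place $p$, the set of locks concretely held by $t$ immediately before that step is contained in $c(\mathit{ls}_a(p))$. The base case concerns $t$'s very first step, necessarily a $\threadentry$: a freshly created thread holds no locks, so the concrete lockset is $\emptyset$, which is trivially contained in $c$ of any abstract value.

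Next I would dispatch the inductive step by case analysis on $\op(e)$ for the CCFA edge $e$ from $t$'s previous place $p'$ to $p$, comparing the concrete effect on $t$'s lockset to the abstract transfer applied to $\mathit{ls}_a(p')$. For $\lockop(a)$ the concrete step acquires some specific lock $\ell$ with $\ell \in c(vs(p', a))$ by soundness of the pointer analysis; the abstract transfer yields $\mathit{ls}_a(p') \cup vs(p', a)$ (or collapses to $\{\unknownlock\}$), and the identity $c(X) \cup c(Y) = c(X \cup Y)$ combines with the inductive hypothesis to give the required containment. For ordinary intra-function edges and for $\funcentry$/$\funcexit$ the abstract transfer is the identity on the lockset, so the invariant is preserved trivially. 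For $\threadentry$ the successor thread holds no locks, matching the $\emptyset$ produced by the transfer.

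The main obstacle I expect is the $\unlockop(a)$ case, since the concrete step removes exactly one lock while the abstract transfer dispatches on three branches. In the first branch ($|s|=1$ and $s \neq \{\unknownlock\}$) the concrete lockset is contained in a singleton and therefore becomes $\emptyset$ after releasing one element, matching the abstract result. In the second branch ($|s \cap vs(p',a)| = 1$), the unique element of $s \cap vs(p', a)$ must be the released lock $\ell$: by the inductive hypothesis and pointer-analysis soundness, $\ell$ lies in both $s$ and $vs(p', a)$, and any other element of $s$ lies outside $vs(p', a)$ and thus survives in $s - vs(p', a)$. In the remaining branch $s$ is left unchanged, which is sound because the concrete lockset only shrinks. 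Once these cases are settled, the final lift from ``bound along the traversed CCFA path'' to $\mathit{ls}_a(p_n)$ is immediate, since the transfer from $p'$ along the matching edge is one of the contributions joined into $\mathit{ls}_a(p_n)$ by the fixpoint equation.
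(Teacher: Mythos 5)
Your proposal is correct and follows essentially the same route as the paper's proof: an induction along the executing thread's path (lifted to places via Property~\ref{prop:basic}), a base case at \textsf{thread\_entry}, and a case analysis on lock/unlock that leans on pointer-analysis soundness, the identity $c(X)\cup c(Y)=c(X\cup Y)$, and the three-branch reasoning for $\unlockop$. The only (immaterial) divergence is at the end: you run the induction directly on the fixpoint values $\mathit{ls}_a(p)$ and observe that each place absorbs the predecessor's transfer contribution via the join, whereas the paper first establishes the bound for the path-local abstract values and then appeals to the standard MOP-versus-MFP argument for monotone transfer functions on a finite lattice.
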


\vspace{0.1ex}

\begin{proof}
Let $t = t_n$ and let $E' = E|t$. By Property~\ref{prop:basic} there is an
execution path $E_p$ that ends in a sequence $E_p'$ which consists of the same
sequence of places as $E'$. 
We write $\lockset_c(i)$ for the concrete lockset before executing step $i$ of
$E'$. These locksets result from the execution of the program. We write
$\lockset_a(i)$ for the may lockset before handling step $i$ of $E_p'$. These
locksets are the result of applying the transfer function defined in
Fig.~\ref{fig:may_lock_set_analysis}.

We next show by induction that the may locksets computed by our analysis for
each step along $E_p'$ overapproximate the concrete locksets of $E'$ at the
corresponding steps. That is, we show that for all $0 \le i < |E'|\colon
\lockset_c(i) \subseteq c(\lockset_a(i))$.
Each thread starts with an empty lockset. In our analysis this is reflected by
the clause (2) in Fig.~\ref{fig:may_lock_set_analysis}. Hence the base case
$\lockset_c(0) = \emptyset \subseteq \emptyset = c(\emptyset) = c(\lockset_a(0))$ holds.

We next show the induction step via a case distinction based on whether step $i$
is (1) a lock operation or (2) an unlock operation.

\vspace{2ex}

\noindent
\textbf{(1)} Let $\lockset_c(i) \subseteq c(\lockset_a(i))$ and let $p = E'[i][0]$. We show that then also
$\lockset_c(i+1) \subseteq c(\lockset_a(i+1))$ after a lock operation $\lockop(a)$. We
perform a case distinction over the cases of the definition of
$\transf{.}(\lockset_a(i))$ (see Fig.~\ref{fig:may_lock_set_analysis}).

\vspace{1.2ex}

\begin{adjustwidth}{3mm}{}
\noindent
\textbf{(Case 1)} Let $l$ be the concrete lock acquired. By the correctness of the
        pointer analysis, we have $\{l\} \subseteq c(\mathit{vs}(p, a))$.
        Therefore:\\[-1.5ex]

        \noindent
        $\lockset_c(i+1) = \lockset_c(i) \cup \{l\} \subseteq c(\lockset_a(i)) \cup c(\mathit{vs}(p, a)) \subseteq c(\lockset_a(i) \cup \mathit{vs}(p, a)) = c(\lockset_a(i+1))$\\

\vspace{-1ex}

\noindent
\textbf{(Case 2)} $\lockset_c(i+1) \subseteq A = c(\{\unknownlock\})$ holds since $A$ is the set of all locks.
\end{adjustwidth}

\vspace{1ex}

\noindent
\textbf{(2)} Let $\lockset_c(i) \subseteq c(\lockset_a(i))$ and let $p = E'[i][0]$. We show that then also
$\lockset_c(i+1) \subseteq c(\lockset_a(i+1))$ after an unlock operation $\unlockop(a)$. We
perform a case distinction over the cases of the definition of
$\transf{.}(\lockset_a(i))$ (see Fig.~\ref{fig:may_lock_set_analysis}).

\vspace{1.5ex}

\begin{adjustwidth}{3mm}{}
\noindent
\textbf{(Case 1)} Since $|\lockset_a(i)| = 1$ and $\lockset_a(i) \neq \{\unknownlock\}$,
         we also have $|\lockset_c(i)| = 1$. Therefore, $\lockset_c(i+1) = \emptyset \subseteq \emptyset = c(\emptyset) = c(\lockset_a(i+1))$.

\vspace{1.5ex}

\noindent
\textbf{(Case 2)} Let $l$ be the concrete lock released. 
        Thus we have $l \in c(\lockset_a(i))$ and $l \in c(\mathit{vs}(p, a))$.
Since $|\lockset_a(i) \cap
        \mathit{vs}(p, a)| = 1$ we have that $c(\lockset_a(i) \cap \mathit{vs}(p, a)) = \{l\}$. Thus, $c(\lockset_a(i)
        - \mathit{vs}(p, a)) = c(\lockset_a(i)) - \{l\}$. Therefore, $\lockset_c(i+1) = \lockset_c(i) - \{l\} \subseteq
        c(\lockset_a(i)) - \{l\} = c(\lockset_a(i) - \mathit{vs}(p, a)) = c(\lockset_a(i+1))$.

\vspace{1.5ex}

\noindent
\textbf{(Case 3)} $\lockset_c(i+1) \subseteq \lockset_c(i)$ and thus $\lockset_c(i+1) \subseteq c(\lockset_a(i)) =
        c(\lockset_a(i+1))$

\end{adjustwidth}

\vspace{2ex}

Thus, we have shown that the may lockset is an overapproximation of the concrete
lockset at any step along $E_p'$, and
thus in particular also at the final step of $E_p'$ (i.e., at the place
associated with the final step of $E_p'$).
%
%
We can now use the properties of data flow analyses to complete the proof.
First, since the may lockset computed for the final place of $E_p$ is an
overapproximation of the concrete lockset, it follows that the ``meet over all
paths'' (MOP) at this place is an overapproximation of the concrete
locksets for all concrete executions that might reach that place.

Second, the analysis given in Fig.~\ref{fig:may_lock_set_analysis} consists of
a finite lattice with top element $\{\unknownlock\}$, join function $\join$, and
a monotonic transfer function.
%
%
Consequently, the minimal fixpoint solution (MFP) of the data flow
equations overapproximates the MOP solution. Therefore, since the MOP solution is
sound, the MFP solution is also sound.
\end{proof}

\subsection{Lock graph correctness}

During a concrete execution of a program, each step of the execution has an
associated \emph{lock allocation graph} (LAG). The LAG has two types of nodes: threads
and locks. There is an edge from a lock node to a thread node if the lock is
assigned to that thread (allocation edge). There is an edge from a thread node
to a lock node if the thread has requested the lock (a request edge). If the LAG
at a certain step in the execution has a cycle, then the involved threads have
deadlocked. Those threads cannot make any more steps (but other threads might).
We thus need to show that whenever there is an execution that has a cyclic LAG,
then our lock graph also has a cycle $c'$ for which $\textsf{all\_concurrent}(c')$
holds.

We first show a lemma about the lock graph closure computation (see
Fig.~\ref{fig:build_lock_graph}) that we will use later on.

\begin{lemma}\label{lem:closure}
Let $\lockset_1$, $\lockset_2$, $\lockset_3$, $\lockset_4$ be nonempty static
locksets, and let $\place, \place'$ be places. Let further $l \in c(\lockset_2)$
and $l \in c(\lockset_3)$. Let $L = \{ (\lock_1, \place'', \lock_2)~|~(\lock_1
\in \lockset_1 \wedge \lock_2 \in \lockset_2 \wedge p'' = p) \vee (\lock_1 \in
\lockset_3 \wedge \lock_2 \in \lockset_4 \wedge p'' = p') \}$. Then:

\vspace{-3ex}

\begin{multline*}
\forall \lock_1 \in \lockset_1, \lock_2 \in \lockset_4\colon \exists
\lock\colon\\
(\lock_1, \place, \lock), (\lock, \place', \lock_2) \in \mathit{cl}(L)
\end{multline*}
\end{lemma}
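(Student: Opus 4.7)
The plan is to prove the claim by case analysis on which (if any) of $\lockset_2$ and $\lockset_3$ equals $\{\unknownlock\}$. In each case, I exhibit a witness intermediate lock $\lock$ and verify that both $(\lock_1, p, \lock)$ and $(\lock, p', \lock_2)$ appear in $\mathit{cl}(L)$, either as direct edges of $L$ or as closure-added edges. The four cases come from the product of the two binary choices.

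The easy case is when neither $\lockset_2$ nor $\lockset_3$ is $\{\unknownlock\}$. Then $c(\lockset_2) = \lockset_2$ and $c(\lockset_3) = \lockset_3$, so the hypothesis $l \in c(\lockset_2) \cap c(\lockset_3)$ gives $l \in \lockset_2 \cap \lockset_3$. Taking $\lock = l$, both edges $(\lock_1, p, l)$ and $(l, p', \lock_2)$ lie in $L$ by its definition, hence in $\mathit{cl}(L)$. The symmetric all-$\unknownlock$ case, $\lockset_2 = \lockset_3 = \{\unknownlock\}$, is equally straightforward: the witness $\lock = \unknownlock$ makes both edges elements of $L$ directly.

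The interesting mixed cases are handled by the closure. Say $\lockset_2 = \{\unknownlock\}$ and $\lockset_3 \neq \{\unknownlock\}$, so $l \in \lockset_3$. The edge $(l, p', \lock_2)$ is in $L$ directly. For the first hop, $(\lock_1, p, \unknownlock) \in L$, so the first clause of the closure contributes $(\lock_1, p, \lock')$ for every $\lock' \in \textsf{get\_locks}(L) \setminus \{\lock_1, \unknownlock\}$. Since $l \in \lockset_3 \subseteq \textsf{get\_locks}(L)$, the choice $\lock = l$ works whenever $l \neq \lock_1$. The remaining subcase ($\lockset_3 = \{\unknownlock\}$, $\lockset_2 \neq \{\unknownlock\}$) is symmetric, using the second clause of the closure.

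The main obstacle is the boundary subcase $l = \lock_1$ (and its mirror image $l = \lock_2$), where the closure rule explicitly excludes $\lock_1$ from the targets it fills in, so $l$ cannot be used as the witness. I would resolve this by picking an alternative witness: if $\lockset_3$ contains any element $m$ distinct from $\lock_1$, then $(m, p', \lock_2) \in L$ directly and $(\lock_1, p, m) \in \mathit{cl}(L)$ by the same closure clause. The only truly delicate situation is when such an $m$ does not exist, which forces a careful argument using the interplay of $\textsf{get\_locks}(L)$, $\lockset_1$ and $\lockset_4$ (and possibly a re-examination of the intended reading of $\textsf{get\_locks}$). Apart from this corner, the proof is otherwise a short and mechanical case split, so the bulk of the write-up will be ensuring the boundary conditions are watertight.
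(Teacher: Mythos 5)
Your case split is essentially the paper's own proof: the paper distinguishes (1) a common element $\lock \in \lockset_2 \cap \lockset_3$ (possibly $\unknownlock$), where both edges lie in $L$ directly, and (2)/(3) the mixed cases $\lockset_2 = \{\unknownlock\}$ or $\lockset_3 = \{\unknownlock\}$, where one hop is supplied by the closure. Your first two cases collapse into the paper's case (1), and your mixed cases are its (2) and (3). So far the two arguments coincide.

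The boundary subcase you flag, however, is a genuine issue that the paper's proof silently steps over: in its case (2) it asserts $(\lock_1, \place, \lock) \in \mathit{cl}(L)$ ``by the definition of $\mathit{cl}()$'' without noticing that the closure clause excludes $\lock = \lock_1$. Your repair via an alternative witness $m \in \lockset_3 \setminus \{\lock_1\}$ is correct (such an $m$ appears in $\textsf{get\_locks}(L)$ because $\lockset_4$ is nonempty, so $(\lock_1,\place,m)$ is closure-added and $(m,\place',\lock_2) \in L$). But the residual subcase you were right to be suspicious of cannot be patched: take $\lockset_1 = \lockset_3 = \{a\}$, $\lockset_2 = \{\unknownlock\}$, $\lockset_4 = \{b\}$ with $a \neq b$ and $l = a$. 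Then $L = \{(a,\place,\unknownlock), (a,\place',b)\}$ and $\mathit{cl}(L) = L \cup \{(a,\place,b)\}$; the out-neighbours of $a$ at $\place$ are $\{\unknownlock, b\}$ while the only in-neighbour of $b$ at $\place'$ is $a$, so no intermediate $\lock$ exists and the lemma as literally stated is false. The statement is stronger than what the paper actually uses: in the application (Theorem A.2) the instance that matters has $\lock_1$ equal to a concrete lock $l_1$ that is distinct from $l = l_2$ because all locks in a LAG cycle are distinct, so the failing instance never arises there. The honest conclusion is that your proof is as complete as the lemma permits, and the leftover corner reflects a hypothesis missing from the lemma (e.g.\ $\lock_1 \neq l$, or $l \notin \lockset_1$) rather than a missing idea on your part.
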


\begin{proof}~\\

\noindent
\textbf{(1)} Assume $\lock \in \lockset_2, \lock \in \lockset_3$ ($\lock$ may be $\unknownlock$). Then, by the definition
of $L$ above, for all locks $\lock_1 \in \lockset_1, \lock_2 \in \lockset_4, (\lock_1,
\place, \lock),\allowbreak (\lock, \place', \lock_2) \in L \subseteq \mathit{cl}(L)$.

\noindent
\textbf{(2)} Assume $\lockset_2 = \{\unknownlock\}, \lock \in \lockset_3, \lock \neq
\unknownlock$. Then for all locks $\lock_1 \in \lockset_1, \lock_2 \in
\lockset_4, (\lock_1, \place, \unknownlock), (\lock, \place', \lock_2) \in L$.
Then in $\mathit{cl}(L)$ there is an edge $(\lock_1, \place, \lock)$ by the
definition of $\mathit{cl}()$.

\noindent
\textbf{(3)} Assume $\lock \in \lockset_2, \lock \neq \unknownlock, \lockset_3 =
\{\unknownlock\}$. This case is symmetric to (2).
%
%
%
%
\end{proof}

\noindent
We next show a lemma about the definition of $\textsf{all\_concurrent}()$.

\begin{lemma}\label{lem:nonconcurrent}
Let $E$ be an execution prefix, let $G$ be the LAG at its final step, and let $c$ be a cycle in
$G$. Let $t_1, \ldots, t_n$ be the threads involved in the cycle $c$, and let
$(p_1, t_1, o_1),\allowbreak \ldots,\allowbreak (p_n, t_n, o_n)$ be last steps of each thread involved
in $c$ in $E$. Then for all $p_i$, $p_j$:

\vspace{-3.5ex}

\begin{multline*}
\neg \textsf{non\_concurrent}(p_i, p_j) \vee\\
(\textsf{get\_thread}(p_i) = \textsf{get\_thread}(p_j) \wedge \textsf{multiple\_thread}(p_i))
\end{multline*}

\end{lemma}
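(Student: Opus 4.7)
The plan is to use the deadlocked execution $E$ itself as a concrete witness of concurrency for the places $p_1,\ldots,p_n$. A cycle in the LAG at the final step of $E$ means that every involved thread $t_i$ is currently blocked, waiting for a lock held by another thread in the cycle. Consequently, the concrete threads $t_1,\ldots,t_n$ are simultaneously poised at their respective last steps $p_1,\ldots,p_n$ at that moment, and they are pairwise distinct since a single sequential thread of control cannot form a deadlock cycle with itself. This simultaneous, distinct presence is the concurrency witness that drives the whole argument.

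For any two indices $i\neq j$ I would case-split on whether the abstract thread IDs coincide. If $\getthread(p_i) = \getthread(p_j)$, then the two distinct concrete threads $t_i \neq t_j$ are both instantiations of the same abstract thread; by the definition of $\textsf{multiple\_thread}$ --- it holds of an abstract thread exactly when its creation site lies in a loop or recursion --- this immediately yields the right disjunct. If instead $\getthread(p_i) \neq \getthread(p_j)$, then the two concurrently-stuck concrete threads belong to \emph{different} abstract threads, and I appeal to the assumed soundness of the non-concurrency analysis (stated in the preliminaries of Sec.~\ref{sec:proofs}): a positive answer $\textsf{non\_concurrent}(p_i,p_j) = \true$ would rule out exactly such a pair occurring in any execution of the program; since $E$ exhibits one, the left disjunct $\neg\,\textsf{non\_concurrent}(p_i,p_j)$ must hold. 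The degenerate case $i = j$ is not meaningful for a LAG cycle (whose threads are distinct), and the quantification in the conclusion is naturally read over the distinct edges actually present in the cycle.

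The hard part will be pinning down the appeal to non-concurrency-analysis soundness, since Algorithm~\ref{alg:nonconcurrent} is specified operationally rather than semantically. Before the two-case argument above can be invoked, one must state precisely what a positive answer guarantees: if the algorithm returns $\true$ on $(p,p')$ with $\getthread(p)\neq\getthread(p')$, then no execution of the program places two concrete threads from those different abstract threads at $p$ and $p'$ simultaneously. Justifying this splits naturally along the algorithm's two branches: the common-lockset branch follows from must-lockset soundness (analogous to the may-lockset theorem just established), and the create/join branch requires a dedicated correctness argument for the $\mathit{unwind}()$ procedure showing that every concrete interleaving reaching $p$ must have joined out the thread reaching $p'$ before reaching $p$. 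Once that soundness property is in hand, the lemma itself reduces to the one-paragraph case distinction sketched above.
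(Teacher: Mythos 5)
Your proposal is correct and follows essentially the same route as the paper's proof: the deadlocked execution itself witnesses that the blocked threads ran concurrently at $p_1,\ldots,p_n$, the case $\getthread(p_i)\neq\getthread(p_j)$ is discharged by the assumed correctness of the non-concurrency analysis, and the case $\getthread(p_i)=\getthread(p_j)$ with distinct concrete threads forces the creation site into a loop or recursion, yielding $\textsf{multiple\_thread}(p_i)$. The additional work you flag on formally specifying and proving what a positive answer from Algorithm~\ref{alg:nonconcurrent} guarantees is not carried out in the paper either --- that correctness is explicitly taken as an assumption in the preliminaries of Sec.~\ref{sec:proofs} --- so your proof matches the paper's in both structure and level of rigour.
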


\begin{proof}
Since in $E$ all steps $(\place_1, t_1, o_1), \ldots, (\place_n), t_n, o_n)$
could reach a lock operation on which they blocked, they must have been able to
run concurrently in this execution. Now for two places $p_i, p_j,
\textsf{get\_thread}(p_i) \neq \textsf{get\_thread}(p_j)$, it follows that
$\neg \textsf{non\_concurrent}(p_i, p_j)$ by the correctness of the
non-concurrency analysis.

Now assume $\textsf{get\_thread}(p_i) = \textsf{get\_thread}(p_j)$. In this case
the places have the same abstract thread ID but they occur in \emph{different}
concrete threads. This occurs when a thread create operation occurs in a loop or
a recursion (or the call to the function that invokes the thread creation
operation occurs in a loop or recursion, etc.). Then we have
$\textsf{multiple\_thread}(p_i)$.
\end{proof}

We can now show the main theorem stating the soundness of our lock graph and
cycle search.

\begin{theorem}
Let $\prog$ be a program and let $E = (\place_1, t_1, o_1)\allowbreak\ldots(p_n,
t_n, o_n)$ be an execution prefix of $\prog$. Then if the LAG at the final step
of $E$ has a
cycle, then the lock graph of $\prog$ has a cycle $c'$ with
$\textsf{all\_concurrent}(c')$.
%
\end{theorem}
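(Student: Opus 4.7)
The plan is to reduce a LAG cycle observed in the concrete execution $E$ to a cycle in the static (closed) lock graph, and then to invoke the previously proved non-concurrency lemma (Lemma~\ref{lem:nonconcurrent}) to equip that cycle with the $\textsf{all\_concurrent}$ property. So the proof will have two parts: existence of a cycle and concurrency of its edges.

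First I would extract the data from the LAG cycle. Since the LAG at the final step of $E$ contains a cycle, there exist threads $t_1, \ldots, t_n$ and distinct locks $l_1, \ldots, l_n$ (all indices taken mod $n$) such that $t_i$ currently holds $l_i$ and is blocked trying to acquire $l_{i+1}$. For each $i$, let $(p_i, t_i, o_i)$ be the last step of $t_i$ in $E$: by construction $o_i = \lockop(a_i)$ and the concrete argument of $a_i$ is $l_{i+1}$. By the may-lockset correctness theorem proved above, $l_i \in c(\maylockset(p_i))$, and by the assumed correctness of the pointer analysis, $l_{i+1} \in c(vs(p_i, a_i))$. Since $t_{i+1}$ also holds $l_{i+1}$ at its own final step $p_{i+1}$, the may-lockset theorem additionally gives $l_{i+1} \in c(\maylockset(p_{i+1}))$.

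Next I would build the cycle in the closed lock graph $\mathit{cl}(L)$. By the lock graph construction (Fig.~\ref{fig:build_lock_graph}), $L$ contains every edge $(\lock_1, p_i, \lock_2)$ with $\lock_1 \in \maylockset(p_i)$ and $\lock_2 \in vs(p_i, a_i)$. For each consecutive pair $(p_i, p_{i+1})$ I would apply Lemma~\ref{lem:closure} with $\lockset_1 = \maylockset(p_i)$, $\lockset_2 = vs(p_i, a_i)$, $\lockset_3 = \maylockset(p_{i+1})$, $\lockset_4 = vs(p_{i+1}, a_{i+1})$, and common lock $l = l_{i+1}$ (justified by the previous step). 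This yields witnesses $w_i$ with $(w_i, p_i, w_{i+1}) \in \mathit{cl}(L)$; chained around the index circle, these form a cycle $c' = \{(w_i, p_i, w_{i+1}) \mid 1 \le i \le n\}$ in $\mathit{cl}(L)$, whose size is at least two since the $l_i$ are distinct. Concurrency is then immediate: the steps $(p_1, t_1, o_1), \ldots, (p_n, t_n, o_n)$ coexist in the final state of $E$ (each thread is blocked there), so Lemma~\ref{lem:nonconcurrent} directly yields $\textsf{all\_concurrent}(c')$.

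The hard part will be the bookkeeping in the cycle-construction step. Lemma~\ref{lem:closure} produces $w_{i+1}$ as a witness for a two-edge path from $\maylockset(p_i)$ through $vs(p_i, a_i) \cap \maylockset(p_{i+1})$ to $vs(p_{i+1}, a_{i+1})$, but independent applications at consecutive pairs may naively pick different witnesses for the shared place $p_i$. Reconciling them will require a case analysis on whether each of $\maylockset(p_i)$ and $vs(p_i, a_i)$ is precise or equal to $\{\unknownlock\}$: when both are precise, the concrete lock $l_i$ (respectively $l_{i+1}$) can be chosen uniformly; the mixed and fully imprecise cases must invoke the closure rules of Fig.~\ref{fig:build_lock_graph}, which guarantee that whenever a $\unknownlock$-edge is present any lock in $\textsf{get\_locks}(L)$ can serve as the other endpoint, so a consistent choice of $w_i$ is always available. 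Once this consistency is established, the remaining steps follow directly from the two cited lemmas.
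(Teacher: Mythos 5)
Your proposal is correct and follows essentially the same route as the paper's own proof: extract the adjacent thread pairs from the LAG cycle, use may-lockset and pointer-analysis correctness to establish the shared lock $l_{i+1} \in c(vs(p_i,a_i)) \cap c(\maylockset(p_{i+1}))$, invoke Lemma~\ref{lem:closure} on each adjacent pair to obtain two-edge portions in $\mathit{cl}(L)$, chain them into a cycle, and close with Lemma~\ref{lem:nonconcurrent}. The witness-reconciliation issue you flag as the hard part is real but is in fact glossed over by the paper's proof as well, so your explicit treatment of it is, if anything, a slight improvement rather than a divergence.
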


\begin{proof}
Let $c$ denote a cycle in the LAG at the final step of $E$.
%
%
In this cycle, every thread and every lock
occurs exactly once.
Let $t, t'$ be two threads involved in the cycle such that there are edges
$(l_1, t), (t, l_2), (l_2, t'), (t', l_3)$,
for locks $l_1, l_2, l_3$ (we might have that $l_1 = l_3$).

Let $n = |(E|t)|$, $m = |(E|t')|$, and let $(p, t, o) = (E|t)[n-1]$, $(p', t', o') = (E|t')[m-1]$ be the last steps of $E|t$, $E|t'$.
The steps $o, o'$ are lock operations, i.e., $o = \lockop(\mathit{exp}_1\colon l_2), o'
= \lockop(\mathit{exp}_2\colon l_4)$ with expression $\mathit{exp}_1$ referring to $l_2$ and
expression $\mathit{exp}_2$ referring to $l_4$.
That is, $l_2$ is the lock requested by $o$, and $l_4$ is the lock requested by
$o'$. Moreover, $l_1$ is in the lockset $\lockset_c$ at the last
step of $E|t$, and $l_2$ is in the lockset $\lockset_c'$ at the
last step of $E|t'$.
%

The analysis visits each place at least once, hence the transfer function (see
Fig.~\ref{fig:build_lock_graph}) is also applied to the edges outgoing from the
places $\place$ and $\place'$.
Applying the transfer function to the lock edge starting at $\place$ adds edges from
the elements of $\lockset_a(\place)$ to the elements of $\mathit{vs}(p, \mathit{exp}_1)$ to the lock graph $L$.
Applying the transfer function to the lock edge starting at $\place'$ adds edges from
the elements of $\lockset_a(\place')$ to the elements of $\mathit{vs}(p', \mathit{exp}_2)$ to the lock graph $L$.
By the correctness of the may lockset analysis and the correctness of the pointer
analysis we have $l_2 \in c(\mathit{vs}(\place, \mathit{exp}_1)), l_2 \in
c(\lockset_a(p')))$. Therefore, by Lemma~\ref{lem:closure}, it
follows that for all locks $\lock_1 \in \lockset_a(\place), \lock_2 \in
\mathit{vs}(\place', \mathit{exp}_2)$, there is a lock $\lock$ such that
$(\lock_1, \place, \lock),\allowbreak (\lock, \place', \lock_2) \in L$.

Thus, in the previous paragraph, we have shown that for any portion of the cycle $c$ consisting of adjacent
threads $t, t'$ and edges $(l_1, t), (t, l_2),\allowbreak (l_2, t'), (t', l_3)$, there is a portion
$(\lock_1, \place, \lock),\allowbreak (\lock, \place', \lock_2)$ in the lock graph.
Therefore, the lock graph also has a cycle $c'$.
%
%
The places $p, p'$ are the places associated with the final steps of the threads
in $E$ that are involved in the cycle $c$ in the LAG.
Hence, by Lemma~\ref{lem:nonconcurrent}, it follows that
$\textsf{all\_concurrent}(c')$.
\end{proof}

\end{theappendix}

\end{document}